\documentclass[runningheads,a4paper,10pt]{llncs}
\usepackage[utf8]{inputenc}
\usepackage[T1]{fontenc}
\usepackage[english]{babel}
\usepackage{parskip}

\usepackage{amsmath}
\usepackage{amssymb}
\usepackage{amsfonts}
\usepackage{stmaryrd}
\usepackage{xcolor}
\usepackage{algorithm}
\usepackage{algpseudocode}
\usepackage{xspace}
\usepackage{mathtools}
\usepackage{enumerate}
\usepackage{hyperref}

\usepackage{float}
\usepackage{placeins}
\usepackage{caption} 
\usepackage{booktabs}
\usepackage{multirow}

\usepackage{tikz}
\usepackage{pgfplots}

\usepackage{xcolor} 

\newcommand{\F}{\mathbb{F}_2}

\newcommand{\Fm}{\mathbb{F}_{2^m}}
\newcommand{\Fq}{\mathbb{F}_q}

\newcommand{\Fqm}{\mathbb{F}_{q^m}}

\newcommand{\FqmXq}{\mathbb{F}_{q^m} \langle X^q\rangle}
\newcommand{\word}[1]{\ensuremath{\boldsymbol{#1}}}

\newcommand{\Cv}{\word{C}}

\newcommand{\Ev}{\word{E}}

\newcommand{\Gv}{\word{G}}
\newcommand{\Hv}{\word{H}}

\newcommand{\Mv}{\word{M}}

\newcommand{\Rv}{\word{R}}

\newcommand{\Uv}{\word{U}}
\newcommand{\Vv}{\word{V}}

\newcommand{\cv}{\word{c}}

\newcommand{\ev}{\word{e}}

\newcommand{\gv}{\word{g}}
\newcommand{\hv}{\word{h}}

\newcommand{\mv}{\word{m}}
\newcommand{\nv}{\word{n}}

\newcommand{\rv}{\word{r}}
\newcommand{\sv}{\word{s}}

\newcommand{\uv}{\word{u}}
\newcommand{\vv}{\word{v}}

\newcommand{\xv}{\word{x}}
\newcommand{\yv}{\word{y}}

\newcommand{\ie}{\textit{i.e.}\ }

\newcommand{\wrank}[1]{\| #1 \|}
\newcommand{\eqdef}{\stackrel{\textup{def}}{=}}
\newcommand{\vect}[1]{\langle #1 \rangle}

\newcommand{\floor}[1]{\left\lfloor #1 \right\rfloor}
\DeclareMathOperator{\Cadd}{\mathcal{C}_{\textnormal{add}}}
\DeclareMathOperator{\Cmul}{\mathcal{C}_{\textnormal{mul}}}
\DeclareMathOperator{\Cfro}{\mathcal{C}_{\textnormal{Fro}}}

\DeclareMathOperator{\Supp}{Supp}

\newcommand{\Fold}{\mathsf{Fold}}
\newcommand{\Unfold}{\mathsf{Unfold}}
\newcommand{\Ocomp}{\mathcal{O}}

\newcommand{\Gabcode}[2]{\ensuremath{\mathcal{G}_{#1}(#2)}}
\newcommand{\ExtGabcode}[2]{\ensuremath{\mathcal{EG}_{#1}(#2)}}
\newcommand{\AugGabcode}[2]{\ensuremath{\mathcal{G}^+_{\overline{#1}}(#2)}}

\newcommand{\sk}{\ensuremath{\mathsf{sk}}\xspace}
\newcommand{\pk}{\ensuremath{\mathsf{pk}}\xspace}
\newcommand{\ct}{\ensuremath{\mathsf{ct}}\xspace}

\newcommand{\C}{\mathcal C}

\newcommand{\G}{\mathcal G}

{\newtheorem{mean}{Definition}}
{}
{}
{}

\author{Nicolas Aragon \and Chloé Baïsse \and Anthony Fraga \and Philippe Gaborit \and Ilaria Zappatore}

\institute{
	XLIM, CNRS UMR 7252, Universit\'e de Limoges\\
	123, avenue Albert Thomas\\
	87060 {\sc Limoges Cedex}, {\sc (France)}\\
	\email{\{nicolas.aragon, chloe.baisse, anthony.fraga, philippe.gaborit, ilaria.zappatore\}@unilim.fr}}

\title{Constant-time decoding of Gabidulin codes and their generalizations with application to RQC}

\begin{document}

\maketitle

\begin{abstract}
	Gabidulin codes are a rank metric analog of Reed-Solomon codes. Although these codes are used in different very efficient rank-based cryptosystems like the RQC cryptosystem or the Loidreau cryptosystem, there was no constant-time implementation of Gabidulin codes, when having a constant-time implementation is crucial for real-life development of cryptosystems.\\
	In this paper, we propose the first constant-time decoding algorithm of Augmented Gabidulin (AG) codes, a simple variation on Gabidulin codes where one adds zero columns to Gabidulin codes, and which contains the case of Gabidulin codes. These AG codes are used in practice in the most efficient variations of the RQC cryptosystem.\\
	We prove that AG code decoding can be achieved with quadratic complexity. We further present a constant-time algorithm for the left division of $q$-polynomials along with a complete description of the AG code decoding procedure. These algorithms are integrated into the RQC-Block-MS-AG scheme, and we evaluate the performance of our implementation through benchmarks. Our results show that our implementation outperforms the original RQC, though it remains approximately four times slower than HQC. However, it achieves ciphertexts and key sizes about four times smaller, highlighting an appealing trade-off between performance and compactness.
\end{abstract}

\section{Introduction}

\paragraph{Code-based cryptography} was first introduced by McEliece \cite{McEliece1978APK} in 1978. For many years, code-based cryptography was not considered mainstream since number theory-based cryptography permitted very efficient cryptosystems. Things changed with the development of the quantum threat related to Shor's algorithm \cite{shor1994algorithms} in 1994. With the emergence of quantum computers in the 2010s, the NIST, many years later, organized a post-quantum competition in 2017 in order to propose quantum-resistant cryptographic standards for the future. After 8 years of competition, two encryption schemes were chosen: first, KYBER \cite{fips203}, based on lattices, and HQC in 2025 \cite{hqc-spec}, based on coding theory and Hamming distance, introducing a new era for cryptographic paradigms. Most well-known code-based systems like BIKE, HQC, or the McEliece cryptosystem are based on coding problems for the Hamming distance. However, it is also possible to use another metric: the rank metric. This metric, introduced in 1985 by Gabidulin \cite{gabidulin1985theory}, is very different from the Hamming distance. In the 2000s, the rank metric gained significant attention from the coding community due to its relevance to network coding, but this metric can also be used for cryptography. Indeed, it is possible to construct rank-analogues of Reed-Solomon codes: the Gabidulin codes. These codes were used in early cryptosystems, like the GPT cryptosystem \cite{gpt1991ideals}, which consists of an instance of the McEliece cryptosystem using Gabidulin codes. Because of the hardness of decoding random codes in rank metric, sizes of parameters hold the promise of being far smaller than parameters obtained in Hamming metric. Somehow, problems in rank metric are harder than problems in Hamming metric in the same spirit that the discrete logarithm problem is harder for elliptic curves than for $\mathbb{Z}/p\mathbb{Z}$. At first, due to the difficulty of masking the very structured Gabidulin codes, many systems in the spirit of the original McEliece in rank metric were broken and repaired and broken again, so that rank-based cryptography, although holding clear potential, did not seem secure enough. Things began to change in 2013 with the introduction of LRPC codes and the LRPC cryptosystem \cite{gaborit2013lrpc} which, in contrast to previous systems, was not based on a masking of Gabidulin codes. Then later, in 2016 in the same paper as HQC \cite{aguilar2016efficient}, the RQC cryptosystem was proposed. The RQC cryptosystem is an analog of the HQC cryptosystem in rank metric and benefits from the same very nice feature that no masking is necessary; indeed, one has to decode eventually, but the decoding code is public and no masking is needed. For the first time, it was possible to use Gabidulin codes in cryptography without fearing that the masking would be broken. Later in 2017, a special masking for Gabidulin codes, mixing them with LRPC-like codes, was also proposed in \cite{loidreau2017new} and seems to be resistant, and was moreover improved in \cite{aragon2024lowms}.

Rank metric-based systems like RQC and LRPC were submitted to the NIST competition and reached the second round, but eventually did not make it to the third round since generic algebraic attacks were improved for rank metric in \cite{bardet2020algebraic}, attacking parameters proposed to the NIST for these schemes. However, since that time in 2020, many improvements were obtained for the RQC cryptosystem: first, algebraic attacks in rank metric have become better understood and stabilized, which was not the case at the time of the NIST competition; moreover, some improvements were obtained for the RQC cryptosystem in particular: the use of augmented Gabidulin codes (Gabidulin codes to which one adds columns of zeros) and the use of multiple syndromes \cite{bidoux2023rqc} and also more recently the use of blockwise errors \cite{song2025blockwise,aragon2024blockwise}. Eventually, all these improvements led to parameters for RQC very close to parameters obtained for KYBER and hence far smaller than parameters for BIKE or HQC.

\medskip

\paragraph{Constant-time implementation in rank metric.} At the beginning of the NIST standardization process, things were different between the diverse post-quantum communities. For instance, since lattice-based cryptography had been stabilized for some years, researchers had begun to consider constant-time implementations, which was a requirement for the NIST competition, and more than anything, a requirement for the security of practical development of cryptosystems in real life. For the coding community, cryptosystems were less stabilized and, for instance, the HQC team only proposed a constant-time implementation for the third round of competition. Regarding rank metric, implementations in 'almost' constant-time were proposed for the RQC cryptosystem, almost in the sense that the decoding of Gabidulin codes was not fully constant-time (the Euclidean division step in \cite{bettaieb2019preventing} is still implemented in variable time for example), but overall, since RQC did not reach the third round, all the work done during the competition was gathered in the RBC public library \cite{rbclib} and nothing more was done on constant-time implementation of Gabidulin codes.

When these codes are used in the RQC cryptosystem (and its recent variations) as well as in the recent Loidreau's cryptosystem and its variations, there is currently no constant-time implementation of Gabidulin codes even if a constant-time implementation for rank-based cryptography is essential.

\paragraph{Our contributions.}
In this paper, we present a reduction of the augmented Gabidulin decoding problem to the decoding problem of standard Gabidulin codes. This allows us to use the decoding algorithm of Loidreau \cite{loidreau2005welch} and thus reach quadratic decoding complexity, instead of the exponential from linear algebra. We also provide a constant-time implementation of the decoding of AG codes. For this purpose, we modify the core functions of the RBC library for the arithmetic of $q$-polynomials to make them constant-time. In particular, we propose a constant-time algorithm for the left division of $q$-polynomials, a required step in Loidreau's decoding algorithm. While our work focuses on AG codes, these modifications in the RBC library can be useful for other variants of Gabidulin codes. We integrate our decoding algorithm into the RQC-Block-MS-AG scheme \cite{aragon2024blockwise} and show promising performance: our implementation is faster than the original RQC, and although about four times slower than HQC, it achieves ciphertext and key sizes roughly four times smaller.

\paragraph{Organization of this paper.} Section 2 presents the background on rank metric, Gabidulin codes and decoding algorithms, and Gabidulin code generalizations. Section 3 describes the reduction of the augmented Gabidulin code decoding problem to the reconstruction problem. Section 4 gives details on the implementation of the decoding algorithm and the left Euclidean division in a constant-time context. We describe the structure of the RBC library and our modifications and optimizations aimed at maintaining constant-time. Section 5 presents the complexity of the augmented Gabidulin code decoding implementation. Section 6 shows different results of our experiments, including constant-time analysis, comparison over Gabidulin generalizations, and RQC performance.

\section{Preliminaries}
In this section we introduce the basic notions we will be using throughout this paper, starting with the notations. Then, we introduce the Gabidulin codes and their decoding. We also briefly present one of their extension, the augmented Gabidulin codes. Finally, we give an overview of the last version of the Rank Quasi-Cyclic (RQC) scheme using augmented Gabidulin codes.

\subsection{Notation}
Let $q$ be a prime power, $\Fq$ be a finite field of order $q$, and $\Fqm$ the extension field of $\Fq$ of degree $m$.
\noindent
In this paper, we represent vectors by lowercase bold letters $\xv, \yv$ and matrices by uppercase bold letters $\Gv, \Hv$. Polynomials and $q$-polynomials (see Section~\ref{subsec: q-poly_and_Gabs}) are denoted by capital letters $A(X), B(X)$, and sometimes we omit the indeterminate for simplicity.

\noindent

\subsection{Background on rank-metric codes }
Given $\xv=(x_1, \ldots, x_n) \in \Fqm^n$, we define its \textit{support} as,
\[
	\Supp(\xv) \eqdef \langle x_1, \ldots, x_n\rangle_{\Fq},
\]
where $\langle x_1, \ldots, x_n\rangle_{\Fq}$ denotes the $\Fq$-linear span of the $x_i$'s, viewed as vectors in~$\Fq^m$.

\noindent
The \textit{rank weight} of $\xv$ is defined as,
\[
	\wrank{\xv}\eqdef \dim(\Supp(\xv))
\]
and the \textit{rank distance} of two vectors $\xv, \yv \in \Fqm^n$ is,
\[
	d(\xv,\yv)\eqdef \wrank{\xv-\yv}.
\]
\begin{remark}
	If we fix an $\Fq$-basis $\mathcal{B}=\{\beta_1, \ldots, \beta_m\}$ of $\Fqm$, any vector $\xv=(x_1, \ldots, x_n)\in \Fqm^n$ can be written as a matrix $\Mv_\mathcal{B}(\xv)\in \Fq^{m \times n}$, by representing any element $x_i\in \Fqm$ as a column vector whose entries are its coefficients in the basis $\mathcal{B}$. Under this point of view, the rank of $\xv$ is the rank of $\Mv_\mathcal{B}(\xv)$.
\end{remark}
We can now introduce the following,

\begin{mean}[$\Fqm$-linear code]
	An $\Fqm$-linear code $\C$ of length $n$ and dimension $k$ is an $\Fqm$-vector subspace of $\Fqm^n$ whose minimum distance is defined as,
	\[
		d_{min}(C) \eqdef \min_{\xv\in \C\setminus \{0\}} \{\wrank{\xv}\}.
	\]
	It is referred to as a $[n, k]_{q^m}$ code.
\end{mean}
Such codes can be represented by:
\begin{itemize}
	\item  a full-rank \textit{generator matrix} $G\in \Fqm^{k \times n}$, \ie a matrix whose rows form a basis of $\C$,
	\item a full-rank \textit{parity check matrix} $H \in \Fqm^{(n-k)\times n}$, \ie a generator matrix of the \textit{dual} code $\C^{\perp}$ of $\C$, which is the orthogonal of $\C$ with respect to the standard inner product in $\Fqm$.
\end{itemize}

We now recall a formal statement of the decoding problem of rank metric codes, that will be useful later.
\begin{mean}[Decoding($\yv$, $\C$, $t$)]\label{def:decoding} Let $\yv \in \Fqm^n$, $\C$ an $\Fqm$-linear code and $t$ a positive integer.
	Find, if it exists, a codeword $\cv \in \C$ and an error vector $\ev\in \Fqm^n$ such that,
	\[
		\yv=\cv+\ev \textit{ and } \wrank {\ev}\leq t.
	\]
\end{mean}

\subsection{The ring of $q$-polynomials and Gabidulin codes}\label{subsec: q-poly_and_Gabs}
In rank metric cryptography, a central notion is that of $q$-polynomials, a special class of polynomials introduced by Ore in \cite{ore1933special}. They are defined as $\Fqm$-linear combinations of the monomials $X, X^q, X^{q^2}, \ldots, X^{q^i}, \ldots $ respectively denoted by $X, X^{[1]}, X^{[2]}, \ldots, X^{[i]}, \ldots$. In fact, these monomials are the successive iterations of the Frobenius automorphism
\[\begin{array}{cccc}
		\theta : & \Fqm & \to     & \Fqm  \\
		         & x    & \mapsto & x^{q}
	\end{array}\]

\noindent
A nonzero $q$-polynomial $P$ is defined as,
\[ P(X) = \sum_{i = 0}^{d_P} p_iX^{[i]} \]
with $p_i\in \Fqm$, $p_{d_P} \neq 0$ and $d_P\neq 0$. The integer $d_P$ is called \textit{q-degree} of $P$ and it is denoted $\deg_q(P)$.

\noindent
We equip the space of $q$-polynomials with a \textit{non-commutative ring structure}, denoted $\FqmXq$, where addition is defined termwise as in classical polynomial rings, and the multiplication is defined by composition and extended by $\Fqm$-linearity as follows:
\[
	\forall i, j \in \mathbb{N},\ \forall a \in \Fqm, \qquad X^{[i]} X^{[j]} = X^{[i+j]}, \qquad X^{[i]} a = a^{q^i} X^{[i]}.
\]

\noindent
The ring $\FqmXq$ is a \textit{left} and \textit{right} \textit{Euclidean domain}.

\noindent
In this paper, we will focus on the left division, which will be used later in the decoding of Gabidulin codes.

\noindent
\noindent
Given $A, B \in \FqmXq$ with $B \ne 0$, we aim to find unique $Q, R \in \FqmXq$ such that:
\[
	A = B Q + R \quad \text{with } \deg_q(R) < \deg_q(B).
\]

We can now define Gabidulin codes, a family of $\Fqm$-linear codes introduced in \cite{gabidulin1985theory}. These are the \textit{rank-metric analogue} of Reed-Solomon codes.

\begin{mean}[Gabidulin codes]
	Given two positive integers $k,n$, with $k<n\leq m$ and $\gv = (g_1, \ldots, g_n)\in \Fqm^n$ such that $\wrank{\gv}=n$, the \textit{Gabidulin code} of length $n$ and dimension $k$ is defined as,
	\[
		\Gabcode{\gv}{n,k,m} \eqdef \left\{ F(\gv)=(F(g_1), \ldots, F(g_n))  \mid F(X)\in \FqmXq, \hspace{0.2cm} \deg_q(F)<k \right\},
	\]
	The vector $\gv$ is called evaluation vector.
\end{mean}
Gabidulin codes are a family of \textit{Maximum Rank Distance} (MRD) codes, \ie they achieve the largest possible minimum rank distance for given parameters, namely $d_{min}(\Gabcode{\gv}{n,k,m})=n-k+1$. They  also admit efficient decoding algorithms correcting up to half of the minimum distance (see Section~\ref{subsec:decodingGab}).

\subsection{On the Decoding of Gabidulin codes}\label{subsec:decodingGab}
Several decoding algorithms have been proposed for Gabidulin codes (for instance \cite{wachterzeh2012fastdecoding,loidreau2005welch,augot2018generalized}). Similarly to Reed-Solomon codes in the Hamming metric, all these algorithms reduce the decoding problem to a
\textit{reconstruction problem}, or equivalently, to the resolution of a \textit{key equation}.

\begin{mean}[Reconstruction($\yv$, $\gv$, $n$, $k$, $t$)] \label{def: LR} Let $\yv=(y_1, \ldots, y_n) \in \Fqm^n$, $\gv=(g_1, \ldots, g_n) \in \Fqm^n$ and $k, t$ be two positive integers.
	Find a pair of $q$-polynomials $(V, N)$ such that,
	\[
		\begin{cases}
			V(y_i)=N(g_i), \hspace{0.2cm} 1\leq i \leq n \\
			\deg_q(V)\leq t                              \\
			\deg_q(N)\leq k+t-1
		\end{cases}
	\]
\end{mean}
In \cite{loidreau2005welch}, Loidreau showed the following fundamental result,
\begin{theorem}\label{thm:recoDecGab}
	If $(V,N)$ is a solution of \textnormal{\bf{Reconstruction}}($\yv$, $\gv$, $k$, $t$) and $t\leq \lfloor \frac{n-k}{2}\rfloor$, then the codeword $\cv = F(\gv)$ and the error vector $\ev = \yv-\cv$ form a solution of \textnormal{\bf{Decoding}}($\yv$, $\G_{\gv}$, $t$), where $F$ is the quotient of the left euclidean division of $N$ by $V$.
\end{theorem}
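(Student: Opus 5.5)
The plan is the standard Welch--Berlekamp-style argument, adapted to the non-commutative ring $\FqmXq$. As is implicit in the statement, I assume that \textbf{Decoding}($\yv$, $\G_{\gv}$, $t$) has a solution, i.e.\ $\yv = F_0(\gv) + \ev_0$ with $\deg_q(F_0) < k$ and $\wrank{\ev_0} \le t$. I also assume that $(V,N)$ is a nontrivial solution, $V \neq 0$. The goal is to show that $N = V F_0$ in $\FqmXq$. Then the left Euclidean division of $N$ by $V$ has quotient $F_0$ and remainder $0$, by uniqueness of the division. Hence $F = F_0$, $\cv = F_0(\gv)$ and $\ev = \ev_0$, which is a solution of the decoding problem.

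As a side remark, a solution of the reconstruction problem always exists under this assumption. Take $V_0$ to be the monic subspace polynomial vanishing on $\Supp(\ev_0)$, so $\deg_q(V_0) \le t$, and set $N_0 = V_0 F_0$, so $\deg_q(N_0) \le k+t-1$. Then $V_0(y_i) = V_0(F_0(g_i)) + V_0(e_i) = N_0(g_i)$, using $\Fq$-linearity of evaluation and the fact that multiplication in $\FqmXq$ is composition.

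For the main step, let $(V,N)$ be any solution and set $R = V F_0 - N$. Then $\deg_q(R) \le k+t-1$. Using $\Fq$-linearity of $V$ and $V(y_i) = N(g_i)$, we get
\[
	R(g_i) = V(F_0(g_i)) - V(y_i) = -V(e_{0,i}), \qquad 1 \le i \le n.
\]
All the $V(e_{0,i})$ lie in $V(\Supp(\ev_0))$, which is an $\Fq$-space of dimension at most $t$. Hence the vector $R(\gv)$ has rank weight at most $t$.

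On the other hand, $R(\gv)$ is a codeword of the Gabidulin code $\Gabcode{\gv}{n,k+t,m}$. This requires $k+t < n$, which holds since $2t \le n-k$ and we may take $t \ge 1$ (the case $t=0$ is trivial). This code is MRD with minimum distance $n-k-t+1 \ge t+1$, again because $2t \le n-k$. A codeword of rank at most $t$ must therefore be zero, so $R(g_i) = 0$ for all $i$.

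Since the $g_i$ are $\Fq$-linearly independent, the $\Fq$-linear map $R$ vanishes on an $n$-dimensional space. A nonzero $q$-polynomial of $q$-degree $d$ has a root space of dimension at most $d$, and here $d \le k+t-1 < n$. So $R = 0$, i.e.\ $N = V F_0$.

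The main point needing care is the rank bound on $R(\gv)$: it relies on $V$ being $\Fq$-linear, which maps a $t$-dimensional support into a space of dimension at most $t$. Combined with the MRD distance, this is exactly where $t \le \lfloor (n-k)/2 \rfloor$ enters. The rest is bookkeeping on $q$-degrees and the uniqueness of left division, where $V \neq 0$ is needed. Note that if $V = 0$ then $N(\gv) = 0$ forces $N = 0$, so the only degenerate solution is the trivial one.
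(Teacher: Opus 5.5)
Your proof is correct, but it takes a different route from the paper. The paper gives no proof of its own: it cites Loidreau and adds an informal justification. That justification says the solution $(V,N)$ of the reconstruction problem is ``unique up to a scalar factor'', so $V$ is the annihilator of the error support and $N=V\circ F$, and left division then returns $F$.

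You instead prove $N=VF_0$ directly for every solution with $V\neq 0$. You show that $R=VF_0-N$ evaluates on $\gv$ to a codeword of $\Gabcode{\gv}{n,k+t,m}$ of rank weight at most $t<n-k-t+1$. Hence $R(\gv)=0$, and the root-space bound then forces $R=0$. This is essentially the standard argument from Loidreau's paper.

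Your route is more robust than the paper's sketch. The uniqueness claim is accurate only when the error rank equals the parameter $t$. If the error rank $r$ is strictly smaller, then every pair $(LV_0,\,LV_0F_0)$ with $\deg_q(L)\le t-r$ is a solution, and these are not all scalar multiples of one another. Your argument needs no uniqueness and works for any error rank at most $t$. It also makes explicit two hypotheses the statement leaves implicit: that a decoding solution exists, and that $V\neq 0$.

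The paper's viewpoint identifies $V$ as the error locator. You recover this as a byproduct, since $N=VF_0$ gives $V(e_{0,i})=0$ for all $i$, so $V$ vanishes on $\Supp(\ev_0)$.

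One small remark: you do not need $t\ge 1$ to get $k+t<n$. The bound $k+t\le (n+k)/2<n$ holds whenever $k<n$.
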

Therefore, if the number of errors is at most $\tau_0 \eqdef \lfloor \frac{n-k}{2} \rfloor$, decoding a Gabidulin code reduces to solving the \textbf{Reconstruction} problem with $t = \tau_0$. Once the pair $(V, N)$ is found, the transmitted codeword can be recovered via left Euclidean division.

This result holds because any pair $(V, N)$ of $q$-polynomials satisfying the
\textbf{Reconstruction}$(\yv, \gv, n, k, t)$ degree constraints captures the interpolation relations between
the received symbols and the evaluation sequence.
When the rank of the error is $t \le \lfloor \frac{n-k}{2} \rfloor$,
then the solution $(V, N)$ of the reconstruction is unique up to a scalar factor.
In this case, $V$ is the annihilator $q$-polynomial of the error support,
and $N = V \circ F$ where $F$ is the $q$-polynomial defining the transmitted codeword $\cv$.
Hence, performing the left Euclidean division of $N$ by $V$ retrieves $F$,
and evaluating $F$ at $\gv$ reconstructs the original codeword.
Finally, decoding Gabidulin codes involves applying the steps described in  Algorithm \ref{alg:decode_gab}.

\begin{algorithm}
	\caption{Decoding algorithm for Gabidulin codes}
	\label{alg:decode_gab}
	\begin{algorithmic}[1]
		\Require{$\Gabcode{\gv}{n,k,m}$, $\yv \in \Fqm^n$, the received word with error of rank $t\leq \lfloor \frac{n-k}{2} \rfloor$}
		\Ensure{$(\cv, \ev)$ a solution of \textbf{Decoding}($\yv$, $\Gabcode{\gv}{n,k,m}$, $t$)}
		\State $(V, N) \gets$ \Call{SolveReconstruction}{$\yv$, $\gv$, $n$, $k$, $\lfloor \frac{n-k}{2} \rfloor$}
		\State $F \gets$ \Call{LeftEuclideanDivision}{$N$,$V$}
		\State \Return $(F(g_1), \ldots,F(g_n)), \yv - (F(g_1), \ldots,F(g_n) )$
	\end{algorithmic}
\end{algorithm}

A natural approach to solve the \textbf{Reconstruction} problem is to set up a linear system where the unknowns are the coefficients of the $q$-polynomials $(V, N)$. This can be solved using linear algebra in time $O(n^{\omega})$, where $\omega$ is the exponent of matrix multiplication.
In contrast, in \cite{loidreau2005welch}, Loidreau proposed a more efficient algorithm for solving the \textbf{Reconstruction} problem, later improved by Augot \textit{et al.} \cite{augot2018generalized}, leading to a quadratic complexity. In the next section, we summarize this algorithm, as it will play a central role in the remainder of this paper.

\subsubsection{Loidreau's algorithm for the Reconstruction problem.}
The decoding algorithm proposed by Loidreau \cite{loidreau2005welch} and later refined in \cite{augot2018generalized} finds a solution to the \textbf{Reconstruction} problem in quadratic time.
The idea is to iteratively construct two pairs of $q$-polynomials $(N_0, V_0)$ and $(N_1, V_1)$, such that for any $i \leq n$,
\begin{equation}
	\label{eq:loid}
	\forall j \leq i, \left\{
	\begin{array}{l}
		N_0 (g_j) - V_0(y_j) = 0 \\
		N_1(g_j)- V_1(y_j) = 0
	\end{array}
	\right.
\end{equation}
In other terms, after the $i$-th iteration, both pairs satisfy the interpolator conditions on the first $i$ evaluation points.
At the end of the process, step $n$, at least one of the two pairs also satisfies the degree constraints imposed by the \textbf{Reconstruction} problem. \\
The algorithm starts by defining the following initial pairs :
\begin{equation*}
	\label{eq:init_loidreau}
	\begin{array}{ll}
		(N_0(X), V_0(X)) & = (\mathcal{A}_{\langle g_1, \ldots, g_k\rangle}(X), 0) \\
		(N_1(X), V_1(X)) & = (\mathcal{I}_{\gv, \yv}(X), X)
	\end{array}
\end{equation*}
thus guaranteeing that  \eqref{eq:loid} is satisfied.
Note that,
\begin{itemize}
	\item $\mathcal{A}_{\langle g_1, \ldots, g_k\rangle}(X)$ is the \textit{annihilator} $q$-polynomial of the $\Fq$- subspace $\langle g_1, \ldots, g_k\rangle$, \ie, $\mathcal{A}_{\langle g_1, \ldots, g_k\rangle}(\vv) = 0$ for any $\vv$ in this subspace;
	\item $\mathcal{I}_{\gv, \yv}(X)$ is the \textit{interpolation} $q$-polynomial satisfying  $\mathcal{I}(g_i) = y_i$ for $1 \leq i \leq k$.
\end{itemize}
This initialization guarantees that property \eqref{eq:loid} holds for the first $k$ positions.
At each step, the algorithm checks how far the current polynomials are from fulfilling the interpolation property. This is done by computing the \textit{discrepancy vectors} :
\begin{equation}
	\begin{array}{ll}
		\uv_0 & = (u_{0,1}, \ldots, u_{0,n})= N_0(\gv) - V_0(\yv) \\
		\uv_1 & = (u_{1,1}, \ldots, u_{1,n})= N_1(\gv) - V_1(\yv) \\
	\end{array}
\end{equation}
At the end of step $i$, the first $i$ components of the discrepancy vectors must be zero.  \\
From step $k+1$ to step $n$, both pairs of $q$-polynomials $(V_0, V_1)$ and $(N_0, N_1)$ are updated using the $i^{th}$-coordinate of the discrepancy vectors $\uv_0$ and $\uv_1$, so that property \eqref{eq:loid} continues to hold at each iteration.\\
For convenience, we now denote by  $(P_0, P_1)$ either the pair $(V_0, V_1)$ or $(N_0, N_1)$. The same update rule applies to both pairs.\\
In the original Loidreau algorithm \cite{loidreau2005welch,augot2018generalized}, at each iteration $i$, three possible cases are distinguished :
\begin{enumerate}
	\item $u_{1,i} \neq 0$: \textit{interpolation step}.\\
	      In this case, the pair $(N_1, V_1)$ detects a nonzero discrepancy component: the current point $(g_i, y_i)$ is not yet correctly interpolated. To fix this, both pairs of polynomials are updated so that the new ones also satisfy the constraint at index $i$ :
	      \[\begin{array}{llll}
			      P_1'(X) & = P_1^q(X) & - & \frac{u_{1,i}^q}{u_{1,i}}P_1(X) \\
			      P'_0(X) & =P_0(X)    & - & \frac{u_{0,i}}{u_{1,i}}P_1(X)
		      \end{array}\]
	      This is the interpolation step since in the end $u'_{0,i}=u'_{1,i}=0$.
	\item $u_{1,i} = 0$ and $u_{0,i} = 0$ : \textit{no discrepancy case}. In this case, both pairs already interpolate  the current point $(g_i, y_i)$ correctly. In this case, the algorithm simply performs a \textit{dummy interpolation}, applying a Frobenius to maintain the structure of the q-polynomials.
	      \[\begin{array}{ll}
			      P_1'(X) & = P_1^q(X) \\
			      P'_0(X) & =P_0(X)
		      \end{array}\]
	\item $u_{1, i}=0$ and $u_{0,i} \neq 0$: \textit{discrepancy exchange}. In this case, the pair $(N_1, V_1)$ cannot be used to eliminate the discrepancy at index $i$. To overcome this, the algorithm searches for the smallest position $j > i$ where $u_{1,j} \neq 0$ or $u_{0,j} =0$. If such a position exists, the $i^{th}$ and the $j^{th}$-coordinates of both discrepancy vectors are swapped, and the procedure continues as in case 1 or 2. Otherwise the algorithm stops.
\end{enumerate}
After each iteration, the pairs $(V_0,N_0)$ and $(V_1,N_1)$ are swapped, and the discrepancy vectors are recomputed. At the end of the algorithm, one of the two pairs satisfies both the interpolation property \eqref{eq:loid} and the degree constraints, thus providing a valid solution to the \textbf{Reconstruction} problem.\\
The proof of the correctness of the algorithm can be found in \cite{loidreau2005welch,augot2018generalized}.

\noindent
In \cite{bettaieb2019preventing}, a simplified version of the original Loidreau's algorithm is proposed.
This modification removes the \textit{dummy interpolation} (case~2) while keeping the swap step (case~3).
In this version, the algorithm only checks whether $u_{1,i}$ is zero or non-zero:
if $u_{1,i} \neq 0$, an interpolation step is performed;
if $u_{1,i} = 0$ and $u_{0,i} \neq 0$, a swap is executed as in case~3;
and if both $u_{1,i}$ and $u_{0,i}$ are zero, no action is required since the interpolation property already holds.
As a result, all effective updates are handled immediately,
and the positions corresponding to the former dummy case remain automatically satisfied.

Bettaieb et al. prove in \cite{bettaieb2019preventing} that there exists a correlation between the weight of the error to be corrected and the number of interpolation steps.
They also show that this correlation can be exploited to build two theoretical timing attacks against RQC.
As a countermeasure, they provide a constant-time version of the Loidreau's algorithm.

\subsection{Augmented Gabidulin Codes}
Gabidulin codes have given rise to several generalizations tailored to meet specific cryptographic requirements. In this paper, we focus on one of them, the \textit{augmented} Gabidulin (AG) codes \cite{bidoux2023rqc}, which, as we will see later, play a central role in the RQC cryptosystem.

\begin{mean}[Augmented Gabidulin codes]\label{def:AGcode} Let $k \leq n' < m < n$ be positive integers, $\gv = (g_1, \ldots, g_{n'}) \in \Fqm^{n'}$ such that $\wrank{\gv}=n'$, and $\overline{\gv} \in \Fqm^n$ the vector $\gv$ completed by $n-n'$ zeros on the right.

	\noindent
	The AG code $\AugGabcode{\gv}{n,n',k,m}$ is the  $[n, k]_{q^m}$ code defined as \[\AugGabcode{\gv}{n,n',k,m} = \left\{F(\overline{\gv}) \mid F(X) \in \FqmXq, \deg_q(F)<k \right\}.\]
\end{mean}

Clearly, the minimum distance of these codes is $n'-k+1$, as they are obtained by augmenting a Gabidulin code $\Gabcode{\gv}{n',k,m}$ with zero coordinates. Contrary to the Hamming metric case, appending zeros in this rank metric setting increases the number of correctable errors : in particular, it allows unique decoding up to $\left\lfloor \frac{n - k + \varepsilon}{2} \right\rfloor$,
where $\varepsilon$ denotes the dimension of the erasure support, \textit{i.e.} the dimension of the support corresponding to the known part of the error \cite[Proposition 1]{bidoux2023rqc}. In section~\ref{sec:decodingAG} we discuss about the decoding of such codes.

\subsection{The RQC cryptosystem using augmented Gabidulin codes}

In this part, we present the last improved version of the RQC cryptosystem \cite{aragon2024blockwise}.
Compared to the original RQC scheme, three novel approaches are employed.
As proposed in \cite{bidoux2023rqc}, AG codes are used instead of Gabidulin codes,  resulting in a smaller parameter $m$.
Furthermore, again following the proposal in \cite{bidoux2023rqc}, the last RQC variant applies the multiple syndromes (MS) strategy \cite{aguilar2022lrpc}. Finally, the scheme uses blockwise structured errors, a principle introduced in \cite{song2025blockwise}.

\begin{mean}[Blockwise $\ell$-error]
	Let $\nv = (n_1, \ldots, n_\ell)$, $\rv = (r_1, \ldots, r_\ell)$ and $n = \sum_{i = 1}^{\ell} n_i$. An $\ell$-error $e$ with parameters $\nv$ and $\rv$ is an error $\ev = (e_1, \ldots, e_\ell) \in \Fqm^n$ such that for any $\ 1 \leq i \leq \ell, e_i \in \Fqm^{n_i}$ with $\wrank{e_i} = r_i$ and for $i \neq j, \Supp(e_i) \cap \Supp(e_j) = \{0\} $.
\end{mean}

The RQC scheme that embedded the above-mentioned improvements is called to RQC-Block-MS-AG \cite{aragon2024blockwise}.

To describe the cryptosystem in Figure \ref{fig:rqc}, we use the following notations:
\begin{itemize}
	\item $\mathcal{S}^{\nv}_{\rv}(\Fqm)$, the set of blockwise errors of parameters $\nv$ and $\rv$.
	\item $\mathcal{S}^{N\times n}_{\rv}(\Fqm)$, the set of $N \times n$ block matrices of the form $\Mv = (\Mv_1 | \ldots |\Mv_\ell)$ where for all $1\leq i \leq \ell$, the $\Mv_i$ are $N \times n_i$ matrices such that $\dim(\Supp(\Mv_i)) = r_i$.
	\item The product between a vector $\vv \in \Fqm^{n_2}$ and a matrix $\Mv \in \Fqm^{n_2 \times n_1}$ is given by $\vv \cdot \Mv = ((\vv \cdot m_1)^T, \ldots, (\vv \cdot m_{n_1})^T)$, where for $1 \leq i \leq n_1$,  $m_i$ is the $i^{th}$ column of $\Mv$.
	\item $P$ is an irreducible polynomial of degree $n_2$.
\end{itemize}
We also define the procedure
\[ \begin{array}{cccc}
		\Fold: & ( \Fqm^{n_2})^{n_1}        & \to     & \Fqm^{n_2 \times n_1}          \\
		       & (\vv_1, \ldots, \vv_{n_1}) & \mapsto & (\vv_1^T, \ldots, \vv_{n_1}^T)
	\end{array}\]
whose inverse is denoted by $\Unfold$.

\begin{figure}
	\fbox{
		\begin{minipage}{\textwidth}
			KeyGen($1^\lambda$)
			\begin{enumerate}
				\item $\gv \xleftarrow{\$} \mathcal{S}^{m}_{m}(\Fqm)$, $\hv \xleftarrow{\$} \Fqm^{n_2}$, $(\xv, \yv) \xleftarrow{\$} \mathcal{S}^{(n_2, n_2)}_{(r_{\xv}, r_{\yv})} (\Fqm)$
				\item $\sv \gets \xv + \hv \cdot \yv \mod P$
				\item \textbf{return }$\pk = (\gv, \hv, \sv)$, $\sk = (\xv, \yv)$
			\end{enumerate}
			Encrypt($\pk, \mv$)
			\begin{enumerate}
				\item Compute  $\Gv \in \Fqm^{k \times n_1n_2}$ a generator matrix of $\AugGabcode{\gv}{n_1n_2, m, k, m}$ where $\overline{\gv} = (\gv, 0_{n_1n_2-m})$
				\item $(\Rv_1, \Rv_2, \Ev) \xleftarrow{\$} \mathcal{S}^{n_2 \times (n_1, n_1, n_1)}_{(r_1,r_2, r_{\ev})} (\Fqm)$
				      \item$\Uv \gets \Rv_1 + \hv \cdot \Rv_2$
				      \item$\Vv \gets \Fold(\mv\Gv) + \sv \cdot \Rv_2 + \Ev$
				\item \textbf{return }$\Cv = (\Uv, \Vv)$
			\end{enumerate}
			Decrypt($\sk, \Cv)$
			\begin{enumerate}
				\item \textbf{return }$\mathsf{Decode}(\Unfold(\Vv - \yv \cdot \Uv))$
			\end{enumerate}
		\end{minipage}}
	\caption{The RQC-Block-MS-AG cryptosystem}
	\label{fig:rqc}
\end{figure}

Parameters for 128-bit security of the RQC-Block-MS-AG cryptosystem \cite{aragon2024blockwise}are given Table \ref{tab:rqc_params}.

\begin{table}[H]
	\centering
	\begin{tabular}{|c|c|c|c|c|c|c|c|c|c|c|c|c|}
		\hline
		$q$ & $m$ & $n_2$ & $k$ & $\varepsilon$ & $r_{\xv}$ & $r_{\yv}$ & $r_1$ & $r_2$ & $r_{\ev}$ & $n_1$ & DFR  & $\pk + \ct$ (in kB) \\
		\hline
		2   & 43  & 52    & 3   & 32            & 4         & 4         & 4     & 4     & 4         & 2     & -145 & 1.43                \\
		\hline
	\end{tabular}
	\caption{Parameters for RQC-Block-MS-AG-128}
	\label{tab:rqc_params}
\end{table}

\section{Decoding Augmented Gabidulin codes}\label{sec:decodingAG}
In this section, we discuss further the decoding of augmented Gabidulin codes.
We explicitly reduce this problem to a reconstruction problem (see Definition~\ref{def: LR}) and show how Loidreau's algorithm~\cite{loidreau2005welch} can be used to solve it efficiently.

The main idea behind the decoding algorithm is to exploit the structure of AG codes :
since the last $n - n'$ positions of each codeword are zero by construction, the corresponding components of the received vector provide partial information about the error.
By recovering this information, we can transform the \textbf{Decoding} problem of an AG code into the decoding of a standard Gabidulin code with shorter length, increased dimension, and reduced error rank.

\noindent
Let $\AugGabcode{\gv}{n, n', k, m}$ be an AG code as in Definition~\ref{def:AGcode}. \\
Recall that $\overline{\gv}=(g_1, \ldots, g_{n'}, 0,\ldots, 0)$, so denote by $\gv_1\eqdef(g_1, \ldots, g_{n'})$ the sequence of nonzero evaluation points.
Let
\[
	\yv = (\yv_1, 0, \ldots, 0) + (0, \ldots, 0, \yv_2) \in \Fqm^n,
\]
where $\yv_1 = (y_1, \ldots, y_{n'})$ and $\yv_2 = (y_{n'+1}, \ldots, y_n)$.
We consider the \textbf{Decoding }problem \textbf{Decoding}$(\yv, \AugGabcode{\gv}{n, n', k, m}, t)$
where the error rank satisfies
$$
	t \leq \floor{\frac{n'-k+\varepsilon}{2}}
$$
and $\varepsilon$, denotes the dimension of the support erasure $E_2$, \ie the $\Fq$-space generated by the components of $\yv_2$.
We observe that $1\leq \varepsilon \leq \min\{n-n', n'-k\}$.

\noindent
Let $(\cv, \ev)$ be a solution of the above \textbf{Decoding} problem, \ie $\yv = \cv + \ev$, with $\cv$ the transmitted codeword and $\wrank{\ev} \le t$. Denoting by $E = \Supp(\ev)$ the error support, let $V \in\FqmXq$ be its annihilator $q$-polynomial. We have that,
\[
	V(\yv)=V(F(\gv)),
\]
where, $F\in \FqmXq$, $\deg_q(F)<k$ is the $q$-polynomial corresponding to the sent word $\cv$. Hence, $(V,\, V \circ F)$ satisfies the interpolation constraints of the problem
\textbf{Reconstruction}$(\yv, \overline{\gv}, n, k, t)$, \ie
\[
	\begin{cases}
		V(y_i)=N(g_i), \hspace{0.2cm} 1\leq i \leq n \\
		\deg_q(V)\leq t                              \\
		\deg_q(N)\leq k+t-1
	\end{cases}
\]
Moreover, to be precise, we can also observe that since $\floor{\frac{n'-k+\varepsilon}{2}}$ is the unique decoding radius of the above AG code, any solution of this \textbf{Reconstruction} problem coincides with $(V, V\circ F)$. Therefore, without loss of generality, we can restrict our attention to this \textbf{Reconstruction} problem, which fully characterizes the decoding of AG codes.

\noindent
Now, we consider $V_2$, the annihilator polynomial of the support erasure $E_2$. Then, since $E_2\subseteq E$ and the ring of $q$-polynomials is left Euclidean, there exists a unique monic polynomial $W\in \FqmXq$  of $\deg_q(W)\leq t-\varepsilon$ such that $V=W\circ V_2$. If $V_2$ is known, we can compute $\yv_1'\eqdef V_2(\yv_1)$ and  reformulate the previous \textbf{Reconstruction} equations as,
\begin{equation}\label{eq:recAG}
	\begin{cases}
		W(y'_{1,i})=N(g_i), \hspace{0.2cm} 1\leq i \leq n' \\
		\deg_q(W)\leq t-\varepsilon                        \\
		\deg_q(N)\leq k+t-1
	\end{cases}
\end{equation}
This defines a \textit{restricted} \textbf{Reconstruction} problem which can be interpreted as an instance of \textbf{Reconstruction}$(\yv_1,\gv, n', k+\varepsilon, t-\varepsilon)$, since $\deg_q(N)\leq (k+\varepsilon)+(t-\varepsilon)-1$.

\noindent
Now, according to Theorem~\ref{thm:recoDecGab},  we can conclude the following result.
\begin{theorem}\label{thm:decAGcodes}
	The \textbf{Decoding} problem
	\textbf{Decoding}$(\yv,\, \G^+_{\gv}(n, n', k, m),\, t)$
	can be reduced to a standard Gabidulin \textbf{Decoding} problem
	\textbf{Decoding}$(\yv',\, \G_{\gv}(n', k + \varepsilon, m),\, t - \varepsilon)$,
	where $\yv' \in \Fqm^{n'}$ and $\varepsilon $
	denotes the dimension of the support erasure.
\end{theorem}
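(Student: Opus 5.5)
We will make the reduction explicit and then apply Theorem~\ref{thm:recoDecGab} to the shorter code. From the received $\yv = (\yv_1, \yv_2)$, we first read the erasure support $E_2 = \langle y_{n'+1}, \ldots, y_n\rangle_{\Fq}$ directly off the last $n-n'$ coordinates. This works because every codeword of $\AugGabcode{\gv}{n,n',k,m}$ vanishes there, so $\yv_2 = \ev_2$. We then compute the annihilator $V_2$ of $E_2$, which is monic of $q$-degree $\varepsilon$, and set $\yv' \eqdef V_2(\yv_1) \in \Fqm^{n'}$.

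The key step is to show that $\yv'$ is a noisy codeword of $\Gabcode{\gv}{n',k+\varepsilon,m}$ with error of rank at most $t-\varepsilon$. Write $\yv_1 = F(\gv) + \ev_1$ with $\deg_q(F) < k$. By $\Fq$-linearity of $V_2$,
\[
\yv' = (V_2 \circ F)(\gv) + V_2(\ev_1).
\]
Here $\deg_q(V_2\circ F) < k+\varepsilon$, so the first term lies in $\Gabcode{\gv}{n',k+\varepsilon,m}$. For the error term, $\Supp(V_2(\ev_1)) \subseteq V_2(E)$, where $E=\Supp(\ev)\supseteq E_2$. Since $V_2$ restricted to $E$ is an $\Fq$-linear map whose kernel is exactly $E_2$ (the roots of $V_2$ form precisely $E_2$), we get $\dim V_2(E) = \dim E - \varepsilon \le t-\varepsilon$. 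This rank-nullity argument on the support is the heart of the proof. It matches the factorization $V = W \circ V_2$ from the preceding discussion, with $W$ the annihilator of $V_2(E)$, so that (\ref{eq:recAG}) is exactly \textbf{Reconstruction}$(\yv',\gv,n',k+\varepsilon,t-\varepsilon)$.

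Next we check that the radius is compatible. From $t \le \lfloor (n'-k+\varepsilon)/2\rfloor$ we get $2(t-\varepsilon) \le n'-k-\varepsilon = n' - (k+\varepsilon)$, hence $t-\varepsilon \le \lfloor (n'-(k+\varepsilon))/2 \rfloor$. Also $k+\varepsilon \le n'$ because $\varepsilon\le n'-k$. Theorem~\ref{thm:recoDecGab} therefore applies and returns the unique $F' = V_2\circ F$, together with $\ev' = V_2(\ev_1)$.

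Finally, we recover the original solution. We left-divide $F'$ by $V_2$ to obtain $F$, with zero remainder, or equivalently evaluate and compare. We then set $\cv = F(\overline{\gv})$ and $\ev = \yv-\cv$. By uniqueness within the decoding radius, this is the solution of the original problem.

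The main obstacle is the edge case where $\varepsilon \ge t$ or the bound is tight, and more importantly the precise rank count $\dim V_2(E)=\dim E-\varepsilon$. The count relies on $E_2\subseteq E$, which holds by construction. The argument also needs $\deg_q(V_2\circ F) < k+\varepsilon \le n'\le m$, so that the $q$-polynomial representation is unique and the division is exact. I expect these degree and containment bookkeeping checks, rather than any new algebra, to be where care is needed.
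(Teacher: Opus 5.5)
Your proof is correct and uses the same reduction as the paper: apply the erasure annihilator $V_2$ to $\yv_1$ and decode $\yv'=V_2(\yv_1)$ in $\G_{\gv}(n',k+\varepsilon,m)$ up to rank $t-\varepsilon$. The paper gets the rank drop inside the key equation, via $V=W\circ V_2$ with $\deg_q(W)\le t-\varepsilon$ in (\ref{eq:recAG}), whereas you get it directly on the error support by rank--nullity (with $W$ the annihilator of $V_2(E)$, as you note); you also verify the radius condition $t-\varepsilon\le\lfloor (n'-(k+\varepsilon))/2\rfloor$ needed for Theorem~\ref{thm:recoDecGab}, which the paper leaves implicit, and the edge case you flag is harmless since $E_2\subseteq E$ already forces $\varepsilon\le t$.
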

This means the initial \textbf{Decoding} problem for AG  can be reduced to the decoding of a standard Gabidulin code  with shorter length $n'$, larger dimension $k + \varepsilon$,
and reduced error rank $t - \varepsilon$.
\noindent
We can now resume how the algorithm (see Algorithm~\ref{alg:decoding-augmented-gabidulin}) works.
\paragraph*{Step 1. Retrieving the support erasure and its annihilator.}
The first step of the decoding algorithm consists in retrieving $\varepsilon$ elements of the error support $E \eqdef \Supp{(\ev)}$,
We consider $\yv_2$, which directly exposes the last $n - n'$ components of the error.
If these $n - n'$ elements span an $\Fq$-subspace of dimension $\varepsilon$,
we can successfully recover the erasure support $E_2$.
However, this condition is not always satisfied, as the components of $\yv_2$ do not necessarily contain $\varepsilon$ linearly independent elements.
The \textit{decryption failure rate} (DFR) \cite{bidoux2023rqc} quantifies this event -- it represents the probability that a random matrix of size $(n - n') \times t$ over $\Fq$,
whose rows correspond to the last error components, has rank smaller than $\varepsilon$.
It is upper bounded by,
\[
	1-\frac{1}{\delta(n-n')} \sum_{i=\varepsilon}^{\delta}\prod_{j=0}^{\varepsilon-1}\frac{(q^\delta-q^j)(q^{n-n'}-q^j)}{q^\varepsilon-q^j}
\]

Once $E_2$ is found, we can recover its annihilator polynomial $V_2$, for instance using the iterative process described in \cite{ore1933special} \cite{augot2018generalized}.
\paragraph*{Step 2. Decoding a shorter Gabidulin code.} Given $V_2$, we can compute $V_2(\yv_1)$ and by Theorem~\ref{thm:decAGcodes}, we can just apply the Loidreau's algorithm to solve the corresponding \textbf{Reconstruction}$(\yv_1,\gv, n', k+\varepsilon, t-\varepsilon)$ problem as in (\ref{eq:recAG}) and find $(W,N)$.
\paragraph*{Step 3. Recovering the codeword and its associate $q$-polynomial} Since $N=W\circ V_2\circ F$, where $F$ is the $q$-polynomial corresponding to the codeword, we can compute left Euclidean divisions to recover it.

\begin{algorithm}
	\caption{Decoding algorithm of the augmented Gabidulin codes}
	\label{alg:decoding-augmented-gabidulin}
	\begin{algorithmic}[1]
		\Require{$\AugGabcode{\gv}{n,n',k,m}$, $\yv\in\Fqm^n$ the received word with error of rank $t\leq \lfloor \frac{n-k+\varepsilon}{2} \rfloor$, $\varepsilon$ the dimension of the support erasure}
		\Ensure{$(\cv,\ev)$ solution of \textbf{Decoding}$(\yv,\, \AugGabcode{\gv}{n,n',k,m},\, t)$, or $\perp$ if algorithm fails}

		\State $E_2\gets \Call{Echelonize}{\yv_2}$ \Comment{\ie basis of $\Supp(\yv_2)$, supposed to have size $\varepsilon$}
		\If{$\dim(E_2)<\varepsilon$}
		\State \Return $\perp$ \Comment{Decoding failure: dimension of support erasure is too low}
		\EndIf
		\State $V_2(X)\gets \mathcal{A}_{\vect{E_{2,1},\ldots,E_{2,\varepsilon}}}(X)$
		\For{$i=1$ to $n'$}
		\State $y'_i\gets V_2(y_i)$
		\EndFor
		\State $W,N\gets \Call{SolveReconstruction}{\yv', (g_1,\ldots,g_{n'}), n',k+\varepsilon, t-\varepsilon}$ \Comment{Step 2.}
		\State $N_2\gets \Call{LeftEuclideanDivision}{N,W}$
		\State $F\gets \Call{LeftEuclideanDivision}{N_2,V_2}$
		\State $\cv\gets F(\gv)$
		\State \textbf{return} $\cv$, $\yv-\cv$
	\end{algorithmic}
\end{algorithm}
\subsubsection*{Difference with the AG decoding algorithm of \cite{bidoux2023rqc}.}
In \cite{bidoux2023rqc}, the authors introduced a decoding algorithm for AG codes based on solving a linear system over $\Fqm$.
Their method solves the reconstruction equations in (\ref{eq:recAG}) using linear algebra techniques.
However, the \textbf{Decoding} problem is not explicitly formulated as a \textbf{Reconstruction} problem.
In this work, we explicitly reduce the \textbf{Decoding} problem to a \textbf{Reconstruction} problem by leveraging the decoding of a standard Gabidulin code.
This formulation allows us to directly apply Loidreau's algorithm to solve it,
leading to a more efficient and structured decoding procedure
with quadratic time complexity instead to the complexity of linear algebra computations.


\section{Implementation details}\label{sect:implementation}
We implement the \texttt{RQC-Block-MS-AG} scheme (Figure~\ref{fig:rqc}) using the RBC library, employing Algorithm~\ref{alg:decoding-augmented-gabidulin} for decoding.
We target a constant-time implementation.

For the \Call{SolveReconstruction}{} procedure (see line 9 of Algorithm \ref{alg:decoding-augmented-gabidulin}), we rely on the decoding algorithm for Gabidulin codes already available in the RBC library.
It is an implementation of Loidreau's algorithm largely inspired by the constant-time version described in \cite{bidoux2023rqc}.
This implementation, although the number of iterations is fixed, is not fully constant-time, since some of the operations on $q$-polynomials are still executed in variable time.
So, to achieve a complete constant-time implementation of RQC-Block-MS-AG, we need to implement all the required $q$-polynomial operations in constant-time. This constitute a part of the contributions of this paper.

To provide some background on the RBC library in the context of rank metric operations, the field extension considered is $\Fm$, defined as $\F[X]/(P)$, where $P$ is a sparse irreducible polynomial of degree $m$.
The elements and the arithmetic operations defined on this field form the core layer of the RBC library.
An element $x$ of $\Fm$ is represented as a vector $(x_0, \ldots, x_{m-1})$ in $\F^m$, in a data structure called \texttt{rbc\_elt}.
Arithmetic operations on  \texttt{rbc\_elt} are carried out through polynomial arithmetic modulo $P$.
A vector of elements in $\Fm$  is represented by the \texttt{rbc\_vec} structure, consisting of a pointer to fixed size array of \texttt{rbc\_elt}.
Finally, the \texttt{rbc\_qpoly} structure, which represents $q$-polynomials, is composed of three components: $max\_degree$, the upper bound of the $q$-degree; $degree$, the current $q$-degree; and $values$, an \texttt{rbc\_vec} of size $max\_degree + 1$ containing the coefficients of the polynomial.

\subsection{Basic arithmetic operations}
For many operations on $q$-polynomials implemented in the RBC library, the number of iterations depends on the $q$-degree of the input $q$-polynomials.
Consequently, most of our modifications aim to make this number of iterations depend on the upper bound of the $q$-degree rather than on the $q$-degree itself.
Instead of relying on the $max\_degree$ component of a $q$-polynomial as this upper bound, we choose to make it an explicit input parameter of each function.
This approach allows the user to select a custom upper bound that may be smaller than the actual $max\_degree$ of the $q$-polynomial.

This feature is, for example, particularly useful in the decoding algorithm for AG codes. Indeed, in the Loidreau's algorithm, it has been shown that in the interpolation steps $i \in \{k+1, \cdots, n\}$, the $q$-degrees of the polynomials $V_0, N_0, V_1$ and $N_1$ are upper bounded by values that depend on the the step index $i$ (see Proposition 11 in \cite{loidreau2005welch}).
For AG codes, the same bounds hold if we replace the parameter $k$ by $k+\varepsilon$, as explained Section~\ref{sec:decodingAG}.
This modification to the handling of $q$-degrees was the only change required to make the following $q$-polynomial operations constant-time:

\begin{itemize}
	\item addition of $q$-polynomials,
	\item left multiplication of a $q$-polynomial by a scalar,
	\item product of two $q$-polynomials,
	\item evaluation of a $q$-polynomial, and
	\item the $q$-exponentiation of a $q$-polynomial.
\end{itemize}

At the end of each of these functions, the $q$-degree of the resulting $q$-polynomial must be updated.
To ensure this step is performed in constant-time, we implement a dedicated constant-time function for computing the $q$-degree of a $q$-polynomial.
In this function, the coefficients of the input $q$-polynomial are scanned sequentially, starting from the coefficient corresponding to the highest possible $q$-degree ($max\_degree$).
A variable storing the resulting degree is updated at every iteration using \textit{masks} (Algorithm~\ref{alg:deg-qpoly-ct}) : when a nonzero coefficient is encountered, the mask preserves the current value of the variable, preventing further changes.
Since the loop always iterates over all the coefficients, regardless the actual $q$-degree, the execution time remains constant.


We use the same technique in the function that retrieves the leading coefficient of a $q$-polynomial.


\begin{algorithm}[h]
	\caption{constant-time $q$-degree computation}
	\label{alg:deg-qpoly-ct}
	\begin{algorithmic}[1]
		\Require{$A(X)\in\FqmXq$ of $q$-degree upper bounded by $max$}
		\Ensure{$\deg_q(A)$}
		\State $d\gets max$
		\State $mask\gets 0$
		\For{$i=max$ to $0$}
		\State $mask\gets mask\,\vee\,(A_i\neq0)$
		\State $d\gets (1-mask)\cdot (i-1)+mask\cdot d$
		\EndFor
		\State \textbf{return} $d$
	\end{algorithmic}
\end{algorithm}

\subsection{The annihilator and interpolator $q$-polynomials}

The decoding of the AG codes involves computing both annihilator and interpolator $q$-polynomials.
In the second step of Algorithm~\ref{alg:decoding-augmented-gabidulin}, the annihilator $q$-polynomial $V_2$ -- which vanishes on $E_2$, the subspace spanned by the support erasures -- is computed.
Moreover, at the beginning of Loidreau's algorithm, the $q$-polynomials $N_0$ and $N_1$ are initialized respectively as an annihilator $q$-polynomial and an interpolator $q$-polynomial (see Equation~\eqref{eq:init_loidreau}).

To compute $V_2$, we develop a constant-time version of the RBC library function that computes the annihilator $q$-polynomial of a given subspace.
For the initialization of $N_0$ and $N_1$, we keep the choice of the RBC Library, which uses a function to compute and return both the annihilator and the interpolator $q$-polynomials. We adapted this function to a constant-time version as well.
Both constant-time variants of these functions follow exactly the same computational steps as their original algorithms, but employ the constant-time arithmetic operations on $q$-polynomials. \\


\subsection{The left division on $q$-polynomials}
To decode AG codes, $q$-polynomials Euclidean left divisions are required (see Steps 7 and 8 of Algorithm~\ref{alg:decoding-augmented-gabidulin}).
These divisions are performed by using an adaptation of the Euclidean algorithm to the ring of $q$-polynomials.

Let us consider the left division of a $q$-polynomial $A(X)$ by $B(X) = \sum_{i=0}^{d_B} b_iX^{[i]}$, where $d_B=\deg_q(B)$. The goal is to find the unique pair $(Q,R)$ of $q$-polynomials satisfying $A(X) = B(X)Q(X)+R(X)$ with $\deg_q(R) < d_B$. Initially, $Q(X)$ and $R(X)$ are set to $0$ and $A(X)$, respectively.

\subsubsection{Euclidean left division algorithm for $q$-polynomials.}
At each iteration $i$, the Euclidean algorithm cancels the leading coefficient of the current remainder $R_i$, denoted $lc(R_i)$, by computing \[R_{i+1}(X) = R_i(X) -  B(X) \circ cX^{[\deg_q(R_i)-d_B]} \]
The term $c$ is chosen so that $lc(R_i)$ matches the leading coefficient of $B(X) \times cX^{[\deg_q(R_i)-d_B]}$. \\
Since,
\[b_{d_B}X^{[d_B]}\circ cX^{[\deg_q(R_i)-d_B]} = b_{d_B}c^{[d_B]} X^{[\deg_q(R_i)]}\]
$c$ satisfies \[b_{d_B}c^{[d_B]} = lc(R_i)\] \\
Thus,
\[c =( lc(R_i) b_{d_B}^{-1})^{[-d_B]} \]
where the notation $x^{[-t]}$ denotes the inverse of the Frobenius automorphism $\theta^{-1}:x\mapsto x^{q^{-1}}$, iterated $t$ times. \\
The monomial $cX^{[\deg_q(R_i)-d_B]} $ is also added to the current quotient $Q(X)$.

The algorithm terminates when it finds a remainder $R_i$ satisfying $\deg_q(R_i) < d_B$.
The total number of iterations depends on the gap between the $q$-degrees of $A$ and $B$, as well as on the possible degree drops of $R_i$ between two successive iterations.

\subsubsection{Constant-time  left division algorithm for $q$-polynomials}
We propose a constant-time version of the Euclidean algorithm for the left division of $q$-polynomials.

In the standard Euclidean algorithm, the number of iterations is upper bounded by $\deg_q(A)+1$.
This bound is reached in the worst case, when $\deg_q(B)=0$ and the $q$-degree of the remainder $R$ decreases by exactly one at each iteration.
Thus, our constant-time algorithm has to execute at least $\deg_q(A)+1$ iterations.

Since the $q$-degree of $A$ must remain secret, we design our algorithm to always perform $max_A+1$ iterations, where $max_A$ denotes the value of the $max\_degree$ parameter chosen when initializing $A$.
The core idea is to update $Q$ and $R$ as in the classical Euclidean algorithm explained above, while performing \textit{dummy operations} whenever no real update should occur.

Let $i$ be a loop variable, decreasing by $1$ from $max_A$  down to $0$.
In the classical Euclidean algorithm, the monomial used to update $Q$ and $R$ is $cX^{[\deg_q(R)-d_B]}$. In our constant-time version, it is replaced with $cX^{[i]}$ and dummy operations are performed whenever $i \neq \deg_q(R)-d_B$.
This situation occurs in two cases:
\begin{itemize}
	\item \textit{Before the first valid update.} In the classical algorithm, the first meaningful update occurs when $i=\deg_q(A) - d_B$. Since our constant-time loop starts from $i=max_A$, all iterations such that $ max_A \geq i > \deg_q(A)- d_B$ do not correspond to real updates and must therefore be replaced by dummy operations. Valid updates start only when $i\leq \deg_q(A)-d_B$.
	\item \textit{When the remainder's degree drops by more than one.} During the execution of the algorithm, the $q$-degree of the remainder $R$ may decrease by more than one, between two consecutive iterations. When such “degree jumps” occur, some iterations of the loop no longer correspond to a valid update, because the remainder has already lost the term that would normally be canceled at that position. In these cases, the algorithm must perform dummy operations for the number of iterations to remain constant. Formally, this happens whenever $i>\deg_q(R)-d_B$.
\end{itemize}
%

Moreover, when $d_R < d_B$ at the beginning of a loop iteration, the target $q$-polynomials $Q$ and $R$ have already been completely determined, and therefore must not be modified.
So,  $\deg_q(R)  - d_B < 0$ and since $i \geq 0$, this also implies that $d_R - d_B - i < 0$.
Note that this situation is encountered when we check for degree jumps of $R$.

Finally, legitimate operations on $Q$ and $R$ occur only when both $\deg_q(A)  - d_B - i  \geq 0 $ and $\deg_q(R)  - d_B - i \geq 0$ are simultaneously satisfied.
This can be easily handled by using a flag variable that determines whether the current iteration should execute a valid or a dummy operation.

\medspace
In order to maintain a constant number of steps, we have implemented several algorithmic optimizations, described in the following sections. Each of them targets a specific computational bottleneck of the constant-time left division algorithm for $q$-polynomials.

\paragraph*{Combined computation of $q$-degree and leading coefficient.}
The first optimization consists in computing both the $q$-degree and the leading coefficient of $B$ and $R$ within a single constant-time function inspired by Algorithm~\ref{alg:deg-qpoly-ct}. It is straightforward to observe that these two values can be retrieved in a single loop.
For the $q$-polynomial $R$, which is recomputed at each iteration, this improvement saves up to $max_A \cdot max_R$ iterations, where $max_R$ denotes the upper bound of the $q$-degree of $R$.

\paragraph*{Inverse of the iterated Frobenius.} A second optimization concerns the computation of the inverse of the Frobenius iterated a number of times equal to the $q$-degree of a $q$-polynomial (see Algorithm \ref{alg:ifrob-ct}).
This operation is required at each step $i$ to compute $c = (lc(R_i)b_{d_B}^{-1})^{[-d_B]}$.
Let $m = (lc(R_i)b_{d_B}^{-1})$ and $max_B$ denote the upper bound of the $q$-degree of $B$
The following equality holds:
\begin{equation}
	m^{[-d_B]} = (m^{[-max_B]})^{[max_B-d_B]}
\end{equation}
Our implementation exploits this property by first computing $m^{[-max_B]}$, followed by $(max_B-d_B)$ Frobenius iterations and $d_B$ dummy iterations to keep the number of steps unrelated to $d_B$, resulting in a total of $2max_B$ Frobenius computations.
\begin{algorithm}[h]
	\caption{Constant-time inverse of the iterated Frobenius}
	\label{alg:ifrob-ct}
	\begin{algorithmic}[1]
		\Require{$e \in \Fqm$, $A(X) \in \FqmXq$ of $q$-degree upper bounded by $max$}
		\Ensure{$e^{[-\deg_q(P)]}$}
		\State $f \gets e^{[-max]}$
		\State $tmp \gets f$
		\State $mask = 0$
		\For{$i=max$ to $0$}
		\State $tmp \gets tmp^q$
		\State $mask \gets mask \vee (a_i \neq 0)$
		\State $f \gets (1-mask)\cdot f + mask \cdot tmp$
		\EndFor
		\State\textbf{return} $f$
	\end{algorithmic}
\end{algorithm}

\paragraph*{Product with a monomial.}The third optimization is the implementation of a dedicated function to multiply a $q$-polynomial with a monomial (see Algorithm \ref{alg:mul-qpoly-mon-ct}).
This operation, used to compute $B(X) \circ cX^{[i]}$ at each step $i$, would otherwise require scanning all coefficients of $B(X)$.
However, since a monomial contains only a nonzero coefficient, the product can be simplified as,
\begin{equation}
	B(X) \circ cX^{[i]}= \sum_{j=0}^{max_B}  b_j\cdot c^{[j]}X^{[i+j]}
\end{equation}
This formulation avoids unnecessary iterations and makes it possible to perform the multiplication through a single pass over the coefficients of $B(X)$.
We define a monomial by its coefficient and corresponding $q$-degree, which serve as inputs of our function.
\begin{algorithm}[h]
	\caption{constant-time product of $q$-polynomial by a monomial}
	\label{alg:mul-qpoly-mon-ct}
	\begin{algorithmic}[1]
		\Require{$A(X)  \in \FqmXq$ of $q$-degree upper bounded by $max$, $c \in \Fqm$, $i \in \mathbb{N}$}
		\Ensure{$A(X) \cdot cX^{[i]}$}
		\State $P \gets 0$
		\State $p_i \gets a_0 \cdot c$
		\For{$j=1$ to $max$}
		\State $c \gets c^q$
		\State $p_{i+j \mod m} \gets p_{i+j \mod m} + a_j \cdot c $
		\EndFor
		\State\textbf{return} $P$
	\end{algorithmic}
\end{algorithm}

\paragraph*{Update of the quotient.}The last optimization concerns the update of the quotient $Q$.
At step $i$, instead of performing the operation $Q \gets Q + flag \times cX^{[i]}$ on the full $q$-polynomials, we directly update the $i$-th coefficient of $Q$ as follows: $q_i \gets q_i + flag \cdot c$.
This optimization avoids unnecessary assignments and thus eliminates up to $max_Q$ coefficient updates per step, where $max_Q$ is the upper bound on the $q$-degree of $Q$.
Furthermore, if we denote $max_M$ as the $q$-degree upper bound of $cX^{[i]}$, representing it only by its coefficient and $q$-degree--rather than as a $q$-polynomial -- we save $max_M$ elements of $\Fqm$ in memory, as well as $max_M$ initializations and de-allocations $max_M$.

Finally, combining all the proposed optimizations leads to our constant-time left-division algorithm of $q$-polynomials presented in Algorithm~\ref{alg:left-div-ct}.

\begin{algorithm}[h]
	\caption{constant-time left division algorithm}
	\label{alg:left-div-ct}
	\begin{algorithmic}[1]
		\Require{$A(X),B(X)\in\FqmXq$, with $\deg_q(A)\geq\deg_q(B)$, and degrees upper bounded by $max$}
		\Ensure{$Q(X),R(X)$ such that $A(X)=B(X)Q(X)+R(X)$ and $\deg_q(R)<\deg_q(B)$}
		\State $R\gets A$
		\State $Q\gets 0$
		\State $d_A\gets \mathrm{deg\_ct}(A)$ \Comment{From \ref{alg:deg-qpoly-ct}}
		\State $d_B, b_{d_B} \gets \mathrm{deg\_lc\_ct}(B)$   \Comment{Same idea as \ref{alg:deg-qpoly-ct}}
		\State $ib_{d_b}\gets b_{d_B}^{-1}$
		\State $s \gets d_A - d_B - max$ \Comment{prevent leaks on degree gaps}
		\For{$i=max$ to 0}
		\State $d_R, b_{d_R} \gets \mathrm{deg\_lc\_ct}(R)$   \Comment{Same idea as \ref{alg:deg-qpoly-ct}}
		\State $jump\gets d_R-d_B-i$\Comment{prevent leaks on degree falls}
		\State $flag\gets 1- \big((s<0)\vee(jump<0)\big)$
		\State $m \gets ib_{d_B}\cdot r_{d_R}$
		\State $c\gets \mathrm{inverse\_frobenius\_ct}(m, d_B)$ \Comment{Compute $m^{[-d_B]}$ with \ref{alg:ifrob-ct}}
		\State $q_i\gets q_i+ flag\cdot c$
		\State $P \gets \mathrm{qpoly\_monomial\_mul\_ct}(B, c, i)$ \Comment{Compute $B(X) \times c X^{[i]}$ with \ref{alg:mul-qpoly-mon-ct}}
		\State $R\gets R+flag \cdot P$
		\State $s \gets s+1$
		\EndFor
		\State \textbf{return} $(Q,R)$
	\end{algorithmic}
\end{algorithm}

\medspace
In the next section, we will focus on the arithmetic costs of these functions and of the decoding algorithm.

\section{Complexity analysis}\label{sec:complexity}

In order to perform the complexity analysis of the decoding algorithm of the AG codes, we first establish the arithmetic costs of the basic operations on $q$-polynomials over $\Fqm$.

We denote by $\Cadd, \Cmul$ and $\Cfro$ the unit cost of an addition, multiplication and Frobenius map in $\Fqm$, respectively.

\subsection{Complexity analysis of $q$-polynomials arithmetic.}\label{sect:complexity-arithmetic}

Given $A(X), B(X)$, two $q$-polynomials of $q$-degree $d_A$ and $d_B$.
We consider the following operations on $\FqmXq$:
\begin{itemize}
	\item $\mathrm{Cost}\big(A(X)+B(X)\big) = (\max\{d_A,d_B\}+1)\,\Cadd = \Ocomp(\max\{d_A,d_B\})$,
	\item $\mathrm{Cost}\big(A^q(X)\big) = (d_A+1)\,\Cfro = \Ocomp(d_A)$,
	\item $\mathrm{Cost}\big(A\circ B\big) = (d_A+1)(d_B+1)\,(\Cadd+\Cmul+\Cfro) = \Ocomp(d_A d_B)$,
	\item $\mathrm{Cost}\big(A(\alpha)\big) = (d_A+1)\,(\Cadd+\Cmul+\Cfro) = \Ocomp(d_A)$, for $\alpha\in\Fqm$.
\end{itemize}

For the computation of the annihilator $q$-polynomial, we will only consider the case $q=2$, since its simplifies calculations and it is parameters used in practice. Let $V=\{v_1,\ldots,v_n\}$ be a basis of a $\Fm$-vector space. Following \cite{augot2018generalized}, we can compute the sequence $A_i(X)$, for any $1\leq i\leq n$, defined by
\[\begin{split}
		&A_0(X) = X\\
		&A_{i}(X)=\left(X+A_{i-1}(v_{i})\right)\circ A_{i-1}(X)=A_{i-1}^2(X)+A_{i-1}(v_{i})\cdot A_{i-1}(X).
	\end{split}\]
The last $q$-polynomial $A_n(X)$ is the annihilator polynomial $\mathcal{A}_V(X)$.
\begin{proposition}\label{prop:comp-annihilator}
	The computation of the annihilator polynomial requires $\Ocomp(n^2)$ operations over $\Fm$, where $n$ is the dimension of the corresponding subspace.
\end{proposition}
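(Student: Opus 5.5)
We count the arithmetic operations in $\Fm$ needed to run the recursion $A_0(X)=X$, $A_i(X)=A_{i-1}^2(X)+A_{i-1}(v_i)\cdot A_{i-1}(X)$ for $i=1,\ldots,n$. We charge each iteration using the elementary costs listed in Section~\ref{sect:complexity-arithmetic} and then sum over $i$.

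The first step is a degree invariant: $\deg_q(A_i)=i$ for all $0\le i\le n$. We prove it by induction. It holds for $A_0=X$. If $\deg_q(A_{i-1})=i-1$, then $A_{i-1}^2$ has $q$-degree $i$ with leading coefficient the square of that of $A_{i-1}$, which is nonzero. The term $A_{i-1}(v_i)\cdot A_{i-1}$ has $q$-degree at most $i-1$, so it cannot cancel the leading term. Hence $A_i$ is monic of $q$-degree exactly $i$. No condition on $A_{i-1}(v_i)$ is needed for this degree bound. The linear independence of the $v_i$ only guarantees that $A_{i-1}(v_i)\neq 0$, which is relevant for correctness but not for cost.

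The second step is to cost iteration $i$, given $A_{i-1}$ of $q$-degree $i-1$:
\begin{itemize}
\item the evaluation $A_{i-1}(v_i)$ costs $i\,(\Cadd+\Cmul+\Cfro)$;
\item the $q$-power $A_{i-1}^2$ costs $i\,\Cfro$;
\item the scalar multiplication $A_{i-1}(v_i)\cdot A_{i-1}$ costs $i\,\Cmul$;
\item the final addition of two $q$-polynomials of $q$-degree at most $i$ costs $(i+1)\,\Cadd$.
\end{itemize}
Iteration $i$ therefore costs at most $c\,(i+1)$ field operations for an absolute constant $c$. Summing over $i$ gives
\[
\sum_{i=1}^{n} c\,(i+1) = c\left(\frac{n(n+1)}{2}+n\right)=\Ocomp(n^2).
\]

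In the constant-time version all loops run up to the fixed bound $n$ instead of the current degree $i$. Each iteration then costs $\Ocomp(n)$, and over $n$ iterations the total is still $\Ocomp(n^2)$. The only point requiring care is the degree invariant of the first step, since it is what makes each iteration linear rather than quadratic. In particular, the product $X\circ A_{i-1}$ must not be computed as a generic composition, which would cost $\Ocomp(i^2)$. It is computed as the $q$-power $A_{i-1}^2$, as the recursion is written.
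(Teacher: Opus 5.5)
Your proposal is correct and follows essentially the same route as the paper: both use the recurrence $A_i = A_{i-1}^2 + A_{i-1}(v_i)\cdot A_{i-1}$, note that $A_{i-1}$ has $q$-degree $i-1$ so that the evaluation, $q$-power, scalar multiplication and addition each cost $\Ocomp(i)$, and sum over $i$ to obtain $\Ocomp(n^2)$. Your inductive proof of the degree invariant and your remark on the constant-time variant (loops bounded by $n$) are harmless refinements: the paper simply asserts the degree and sums the per-iteration bound $2i\,(\Cadd+\Cmul+\Cfro)$ to $n(n+1)(\Cadd+\Cmul+\Cfro)$.
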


\begin{proof}
	For each $1\leq i < n$ any $A_{i-1}$ has $q$-degree $i-1$ and hence $i$ coefficients. From the recurrence, computing  $A_i$ from $A_{i-1}$ requires one Frobenius, one evaluation, one scalar multiplication (costing $i\Cmul$) and one addition. Thus the $i$-th iteration costs at most $2i\,(\Cadd+\Cmul+\Cfro)$. Summing up the above costs for all iterations we get an overall cost of $n(n+1)(\Cadd+\Cmul+\Cfro)$ leading to a complexity of order $\Ocomp(n^2)$.
\end{proof}

We now compute the cost of the left Euclidean division of the $q$-polynomial $A(X)$ by the $q$-polynomial $B(X)$. To remain consistent with the constant-time implementation, we analyze the complexity of Algorithm~\ref{alg:left-div-ct}.

\begin{proposition}\label{prop:comp-left-div}
	The left Euclidean division of $A(X)$ by $B(X)$ where $d_A\geq d_B$, \ie $A(X) = Q(X)\circ B(X) + R(X)$ with $\deg_q(R)<\deg_q(B)$, can be performed in $\Ocomp(d_A (d_A+d_B))$ field operations.
\end{proposition}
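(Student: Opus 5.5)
We analyse Algorithm~\ref{alg:left-div-ct} directly, taking the loop bound $max$ to be $d_A$ (the $q$-degree upper bound of $A$, and hence of $R$, $Q$ and $B$). We first cost the preprocessing, then bound the cost of a single iteration of the main loop, and finally multiply by the number of iterations. In the preprocessing, $\mathrm{deg\_ct}(A)$ and $\mathrm{deg\_lc\_ct}(B)$ each scan at most $d_A+1$ coefficients with a constant number of masked operations per coefficient, so they cost $\Ocomp(d_A)$. One inversion in $\Fqm$ costs a constant number of field operations in our unit-cost model. The whole preprocessing is therefore $\Ocomp(d_A)$.

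In the loop body, which runs $d_A+1$ times, we cost each line separately.
\begin{enumerate}
\item The call $\mathrm{deg\_lc\_ct}(R)$ scans the coefficients of $R$, whose $q$-degree stays bounded by $d_A$ throughout, since each update only cancels leading terms. It costs $\Ocomp(d_A)$.
\item Computing $jump$, $flag$ and $m$ costs $\Ocomp(1)$.
\item The call to Algorithm~\ref{alg:ifrob-ct} computes one $m^{[-\max_B]}$ and then performs $2\max_B$ Frobenius maps. Taking $\max_B=d_B$ gives $\Ocomp(d_B)$. If we count the inverse Frobenius $x\mapsto x^{[-t]}$ naively as $t$ forward/inverse Frobenius applications, it is still $\Ocomp(d_B)$.
\item Updating $q_i$ costs $\Ocomp(1)$.
\item Algorithm~\ref{alg:mul-qpoly-mon-ct} performs one pass over the $d_B+1$ coefficients of $B$, with one Frobenius, one multiplication and one addition per coefficient, so it costs $(d_B+1)(\Cadd+\Cmul+\Cfro)=\Ocomp(d_B)$.
\item The masked update $R\gets R+flag\cdot P$ is a scalar multiplication and an addition of $q$-polynomials of degree at most $d_A$. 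By the costs listed in Section~\ref{sect:complexity-arithmetic}, this is $\Ocomp(d_A)$.
\end{enumerate}
Summing, one iteration costs $\Ocomp(d_A+d_B)$.

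Multiplying by the $d_A+1$ iterations and adding the preprocessing gives $(d_A+1)\,\Ocomp(d_A+d_B)+\Ocomp(d_A)=\Ocomp(d_A(d_A+d_B))$, which is the bound claimed in Proposition~\ref{prop:comp-left-div}.

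The main point needing care is the claim that $R$ keeps $q$-degree at most $d_A$ in every iteration, including dummy iterations. This is needed so that the degree scan and the addition remain $\Ocomp(d_A)$. It holds because when $flag=0$ the polynomial $R$ is unchanged. When $flag=1$, we have $i\le d_R-d_B$, so $P=B\circ cX^{[i]}$ has $q$-degree $i+d_B\le d_R$, and the choice of $c$ cancels the leading term. Hence $\deg_q(R)$ is non-increasing.

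A second subtlety is the index $i+j \bmod m$ in Algorithm~\ref{alg:mul-qpoly-mon-ct}: its cost is $\Ocomp(d_B)$ regardless of the index reduction, so it does not affect the bound. Correctness of the quotient and remainder is not part of the statement and is taken from the preceding discussion.
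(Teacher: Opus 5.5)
Your proposal is correct and follows essentially the same route as the paper: bound the cost of one pass through the loop of Algorithm~\ref{alg:left-div-ct} (inverse Frobenius and monomial product in $\Ocomp(d_B)$, remainder update in $\Ocomp(d_A+d_B)$), then multiply by the $d_A+1$ iterations. You are in fact slightly more thorough than the paper, since you also count the $\Ocomp(d_A)$ scan $\mathrm{deg\_lc\_ct}(R)$ and justify that $\deg_q(R)$ never exceeds $d_A$. One small imprecision: in dummy iterations $P=B\circ cX^{[i]}$ can have $q$-degree up to $i+d_B\le d_A+d_B$, which is why the paper counts the update $R\gets R+flag\cdot P$ as $(d_A+d_B+1)\Cadd$ rather than $\Ocomp(d_A)$; this leaves your per-iteration bound $\Ocomp(d_A+d_B)$ unchanged.
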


\begin{proof}
	Note that the most expensive operations in Algorithm~\ref{alg:left-div-ct} are steps 12, 14, and 15. Step 12 is Algorithm~\ref{alg:ifrob-ct} (Frobenius inverse), which requires $d_B(\Cmul+\Cfro)$ operations. Step 14 is Algorithm~\ref{alg:mul-qpoly-mon-ct} (product over $q$-polynomial and monomial), which requires $d_B\Cmul$ operations. Step 15 consists of the sum of two $q$-polynomials of degree upper bounded by $d_B+i$, where $0\leq i\leq d_A$; it requires $(d_B+d_A+1)\Cadd$ operations.
	Since these operations are applied over $d_A+1$ iterations of the \texttt{for} loop, the overall cost is $\Ocomp(d_A(d_A+d_B))$ field operations.
\end{proof}

For the sake of simplicity, we consider that the cost of the left Euclidean division is $\Ocomp(d_A^2)$. In practice, this simplification it does not significantly affect the results.

We now have all the costs of the $q$-polynomials operations used in the decoding algorithm, we can now discuss the complexity of this one.

\subsection{Complexity analysis of the decoding Algorithm \ref{alg:decoding-augmented-gabidulin}}\label{sect:complexity-decoding}
Among all the operations over $\Fqm$ of the decoding algorithm, the most significant in terms of complexity is the computation of a basis of a vector support, since it is equivalent to performing Gaussian elimination over $\Fq$.

\begin{proposition}\label{prop:echelonization-cost}
	Computing a $\Fqm$-basis of $\Supp(\vv)$ from a vector $\vv\in\Fqm^n$ requires $\Ocomp(n^2)$ operations over $\Fqm$.
\end{proposition}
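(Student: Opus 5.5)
Computing a basis of $\Supp(\vv)$ amounts to Gaussian elimination on the $m\times n$ binary matrix $\Mv_\mathcal{B}(\vv)$, but it is cheaper to run the elimination directly on the list of $n$ field elements, treating each element of $\Fqm$ as a row vector in $\Fq^m$. Each row operation then becomes a single operation over $\Fqm$: adding a multiple of one row to another is a scalar multiple (for $q=2$, just a conditional addition) plus one addition in $\Fqm$. With this representation, I will show that the number of such operations is $\Ocomp(n^2)$.

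\textbf{Step 1 (set-up).} I view $v_1,\ldots,v_n$ as the rows of an $n\times m$ matrix over $\Fq$. The rank of this matrix is $r=\wrank{\vv}\le\min(n,m)$, and a row echelon form yields a basis of $\Supp(\vv)$.

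\textbf{Step 2 (elimination).} The elimination proceeds over pivot positions $j=1,\ldots,m$. At each pivot position I find a row among the remaining ones whose $j$-th coordinate is nonzero; in the constant-time version this is done with masks rather than by branching. I then add the appropriate $\Fq$-multiple of the pivot row to every other row. This costs at most $n$ coordinate tests and $n$ additions (and scalar multiplications) in $\Fqm$. A pivot is found at most $\min(n,m)$ times. To keep the loop data-independent, I charge every one of the $\min(n,m)$ rounds the full $\Ocomp(n)$ cost, which gives $\Ocomp(n\min(n,m))\subseteq\Ocomp(n^2)$ operations over $\Fqm$.

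The main obstacle is reconciling this bound with the $m$ pivot positions. A naive constant-time loop over all $m$ columns gives $\Ocomp(nm)$ rather than $\Ocomp(n^2)$. I would handle this in one of two ways. The first option is to note that in the relevant regime ($n\le m$, or after a preliminary reduction to at most $m$ independent rows) we have $nm=\Ocomp(n^2)$ up to the relation between the parameters. The second is to count the per-column bit test as a sub-operation of a field operation rather than a full one. In either case I would state the assumption explicitly. The remaining steps, namely the coordinate extraction and the mask-based selection, are routine and contribute only lower-order terms.
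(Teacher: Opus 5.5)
Your argument is essentially the paper's. The paper views $\vv$ as an $m\times n$ matrix over $\Fq$ and bounds Gaussian elimination by $\Ocomp(mn^2)$ operations over $\Fq$. It then observes that each column operation (your row operation, after transposing) acts on a whole element and is a single operation over $\Fqm$, which brings the count down to $\Ocomp(n^2)$. Your bound $\Ocomp(n\min(n,m))$ is the same reasoning, made slightly sharper.

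The paper treats the proposition as a plain operation count, so the constant-time detour is not needed here. Its implicit convention is exactly your second option: one $\Fqm$ operation absorbs $m$ operations over $\Fq$, so the $\Ocomp(nm)$ pivot tests are worth only $\Ocomp(n)$ field operations.

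Drop the first option, though, because it has the inequality backwards: $n\le m$ gives $nm\ge n^2$. It is $m=\Ocomp(n)$, e.g.\ $m\le n$, that yields $\Ocomp(nm)\subseteq\Ocomp(n^2)$. This is precisely the regime in which the paper applies the proposition: Step~1 of Algorithm~\ref{alg:decoding-augmented-gabidulin} echelonizes $\yv_2$, of length $n-n'=61\ge m=43$.
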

\begin{proof}
	Let $\vv\in\Fqm^n$ be a vector. We define $\Vv\in\Fq^{m\times n}$ as the matrix representation of $\vv$ over a fixed $\Fq$-basis $\mathcal{B}$. Computing a basis of the vector space spanned by $\Supp(\vv)$ is equivalent to performing Gaussian elimination on $\Vv$, which requires $\Ocomp(mn^2)$ operations over $\Fq$. We observe that columns operations on $\Vv$ can be performed as operations over $\Fqm$. Thus, we conclude that computing a basis of $\Supp(\vv)$ requires $\Ocomp(n^2)$ operations over $\Fqm$.
\end{proof}

\begin{lemma}\label{lem:linear-reduction-cost}
	The reduction of the \textbf{Decoding} problem of AG codes to an instance of the \textbf{Reconstruction} problem (steps 1 to 8 of Algorithm \ref{alg:decoding-augmented-gabidulin}) requires $\Ocomp(\max\{(n-n')^2,\varepsilon(n'+\varepsilon)\})$ operations over $\Fqm$.
\end{lemma}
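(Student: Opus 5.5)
The plan is to add up the costs of the steps of Algorithm~\ref{alg:decoding-augmented-gabidulin} that do the reduction: the echelonization of $\yv_2$ (line~1), the dimension test (lines~2--4), the computation of $V_2$ (line~5), and the evaluation loop producing $\yv'$ (lines~6--8). Each is bounded with the results already proved in Section~\ref{sec:complexity}. The final bound is the sum of the bounds, which is of the order of their maximum.

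\textbf{Step 1: echelonization.} Line~1 computes a basis of $\Supp(\yv_2)$, where $\yv_2\in\Fqm^{n-n'}$. By Proposition~\ref{prop:echelonization-cost} with $n$ replaced by $n-n'$, this costs $\Ocomp((n-n')^2)$ operations over $\Fqm$. The test $\dim(E_2)<\varepsilon$ is read directly from the echelon form, so it is $\Ocomp(1)$, or at worst $\Ocomp(n-n')$ if a scan is needed. This is dominated.

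\textbf{Step 2: annihilator.} Line~5 computes $\mathcal{A}_{E_2}$, the annihilator of a subspace of dimension $\varepsilon$. By Proposition~\ref{prop:comp-annihilator} this costs $\Ocomp(\varepsilon^2)$ operations. The resulting $V_2$ has $q$-degree $\varepsilon$.

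\textbf{Step 3: evaluations.} The loop evaluates $V_2$ at the $n'$ points $y_1,\ldots,y_{n'}$. By the evaluation cost in Section~\ref{sect:complexity-arithmetic}, each evaluation costs $(\varepsilon+1)(\Cadd+\Cmul+\Cfro)$, so the loop costs $\Ocomp(n'\varepsilon)$. Adding Steps~2 and~3 gives $\Ocomp(\varepsilon^2+n'\varepsilon)=\Ocomp(\varepsilon(n'+\varepsilon))$.

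\textbf{Conclusion.} The total is $\Ocomp((n-n')^2+\varepsilon(n'+\varepsilon))$, which is $\Ocomp(\max\{(n-n')^2,\varepsilon(n'+\varepsilon)\})$ because a sum of two nonnegative terms is at most twice their maximum.

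The main point needing care is the accounting in Step~1. Proposition~\ref{prop:echelonization-cost} counts operations over $\Fqm$ by treating column operations on the $\Fq$-matrix as operations on $\Fqm$ elements. We must check that the input really has length $n-n'$, the zero-padded part only, and not $n$. Otherwise the bound would degrade to $\Ocomp(n^2)$. A second, minor point is that in the constant-time variant the loops run to upper bounds rather than actual degrees. Here $\deg_q(V_2)\le\varepsilon$ is a public bound, so the estimates are unchanged.
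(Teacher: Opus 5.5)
Your proposal is correct and takes the same route as the paper's proof. Both bound line~1 by $\Ocomp((n-n')^2)$ via Proposition~\ref{prop:echelonization-cost}, line~5 by $\Ocomp(\varepsilon^2)$ via Proposition~\ref{prop:comp-annihilator}, and the $n'$ evaluations of $V_2$ by $\Ocomp(n'\varepsilon)$, then take the maximum; your remarks on the dimension test and on public degree bounds in the constant-time variant just make explicit what the paper leaves implicit.
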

\begin{proof}
	Step 1 is calculation of a $\Fqm$-basis of a vector of size $n-n'$, which requires $\Ocomp((n-n')^2)$ operations. Step 5 is a computation of the annihilator polynomial of $q$-degree $\varepsilon$, which requires $\Ocomp(\varepsilon^2)$ operations from Proposition \ref{prop:comp-annihilator}. Step 7 is a $q$-polynomial evaluation over the $n'$ coordinates of $\yv_1$, it thus requires $\Ocomp(n'\varepsilon)$.
\end{proof}

The next part of the decoding Algorithm~\ref{alg:decoding-augmented-gabidulin} (step 9) consists of solving the \textbf{Reconstruction} problem for a standard Gabidulin code. Following \cite{augot2018generalized}, the division-free variant of \Call{SolveReconstruction}{} requires $\Ocomp(\max\{n'^2,(k+\varepsilon)^2\})$ operations over $\Fqm$.

The final part of the decoding (steps 10 and 11) involves two left Euclidean divisions.  From Proposition~\ref{prop:comp-left-div}, these operations require $\Ocomp((k+t)^2)$ and $\Ocomp(\varepsilon^2)$ operations over $\Fqm$, respectively.

\medskip
From the Table \ref{tab:rqc_params}, we observe that $n-n'\geq n'\geq k+\varepsilon$, and $(n-n')^2\geq \varepsilon(n'+\varepsilon)$. Thus, the next theorem follows.

\begin{theorem}\label{th:complexity-ag-decoding}
	Algorithm~\ref{alg:decoding-augmented-gabidulin} performs $\Ocomp((n-n')^2)$ operations over $\Fqm$.
\end{theorem}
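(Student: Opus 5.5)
The plan is to sum the costs of the three phases of Algorithm~\ref{alg:decoding-augmented-gabidulin} established above, then use the parameter relations observed just before the statement to show that every term is dominated by $(n-n')^2$.

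\textbf{Phase 1 (steps 1--8).} By Lemma~\ref{lem:linear-reduction-cost}, which itself rests on Proposition~\ref{prop:echelonization-cost} and Proposition~\ref{prop:comp-annihilator}, the reduction costs $\Ocomp(\max\{(n-n')^2,\varepsilon(n'+\varepsilon)\})$. The failure test on $\dim(E_2)$ is a single comparison and adds nothing asymptotically.

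\textbf{Phase 2 (step 9).} Solving \textbf{Reconstruction}$(\yv',\gv,n',k+\varepsilon,t-\varepsilon)$ with the division-free Loidreau algorithm of \cite{augot2018generalized} costs $\Ocomp(\max\{n'^2,(k+\varepsilon)^2\})$.

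\textbf{Phase 3 (steps 10--12).} The two left divisions are bounded by Proposition~\ref{prop:comp-left-div}, in its simplified form $\Ocomp(d_A^2)$.
\begin{itemize}
\item For the first division, $\deg_q(N)\le k+t-1$, so its cost is $\Ocomp((k+t)^2)$.
\item For the second division, the dividend $N_2 = V_2\circ F$ has $q$-degree $\varepsilon+\deg_q(F)<\varepsilon+k$. Its cost is therefore $\Ocomp((k+\varepsilon)^2)$, slightly more careful than the $\Ocomp(\varepsilon^2)$ stated in the text but equally harmless.
\item The final evaluation $F(\gv)$ has $\deg_q(F)<k$, so it costs $\Ocomp(n'k)$.
\item Forming $\yv-\cv$ costs $\Ocomp(n)$.
\end{itemize}

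\textbf{Domination step.} I expect this to be the main obstacle, since it is where the statement depends on the parameter regime of Table~\ref{tab:rqc_params} rather than on general facts. Using $n-n'\ge n'\ge k+\varepsilon$ and $(n-n')^2\ge\varepsilon(n'+\varepsilon)$:
\begin{itemize}
\item $n'^2$, $(k+\varepsilon)^2$ and $n'k$ are all at most $(n-n')^2$.
\item Phase 1 is $\Ocomp((n-n')^2)$ directly from the second assumption.
\item For $(k+t)^2$, I would use $t\le\lfloor (n'-k+\varepsilon)/2\rfloor$. This gives $k+t\le (n'+k+\varepsilon)/2\le n'$, hence $(k+t)^2\le (n-n')^2$.
\end{itemize}

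The remaining term is the additive $\Ocomp(n)$. Here I would write $n=n'+(n-n')\le 2(n-n')$, so it is $\Ocomp(n-n')$ and absorbed.

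Summing a constant number of terms, each $\Ocomp((n-n')^2)$, gives the claim. I would state explicitly that this holds under the stated parameter relations, as the paper's remark preceding the theorem indicates.
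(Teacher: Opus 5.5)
Your proposal is correct and follows the paper's own argument: you sum the costs of steps 1--8 (Lemma~\ref{lem:linear-reduction-cost}), step 9 ($\Ocomp(\max\{n'^2,(k+\varepsilon)^2\})$ via Loidreau's algorithm) and the two left divisions, then absorb everything into $(n-n')^2$ using the relations $n-n'\ge n'\ge k+\varepsilon$ and $(n-n')^2\ge\varepsilon(n'+\varepsilon)$ taken from Table~\ref{tab:rqc_params}. Your extra details fill in steps the paper leaves implicit without changing the route: the bound $k+t\le n'$ from the decoding radius, the $\Ocomp((k+\varepsilon)^2)$ cost of the second division (more accurate than the paper's $\Ocomp(\varepsilon^2)$, since $N_2=V_2\circ F$ has $q$-degree up to $k+\varepsilon-1$), and the final evaluation and subtraction.
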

%
This constitutes a significant reduction compared to previous approaches, which rely on solving large linear systems and exhibit higher asymptotic cost.

In the following section, we also compare this result with other generalizations of Gabidulin codes and show that our method achieves better performance in practice.

\section{Results} \label{sec:results}

In this section, we will present and discuss on experimental results that we have observed. Firstly, we will compare decoding algorithms for the AG codes and another Gabidulin code generalization. Secondly, we will compare the RQC-Block-MS-AG implementation with others KEM implementations. Thirdly and lastly, we will provide some results regarding the constant-time analysis of the decoding algorithm.

The benchmarks have been performed on a machine equipped with an Intel\textsuperscript{\circledR} Core\texttrademark\ I7-13700 CPU for which the Turbo Boost feature has been disabled during the tests. Programs were compiled using \texttt{gcc} and use the \texttt{openssl} library as a provider for hash functions.

\subsection{Comparison of the complexity of AG and EG decoding algorithms}

\noindent
There exists another generalization of Gabidulin codes called extended Gabidulin (EG) codes.
These codes were first introduced by Berger and Ourivski in 2009 \cite{berger2009construction}, originally with the aim of constructing new MDS codes derived from classical Gabidulin codes.

\begin{mean}[Extended Gabidulin codes]  Let $k , n', n, m$ be positive integers such as $k \leq n' \leq \min\{n,m\}$, and $\gv = (g_1, \ldots, g_n)\in \Fqm^n$ with $\wrank{\gv} = n'$.

	\noindent
	The EG code $\ExtGabcode{\gv}{n,n',k,m} $ is the $[n, k]_{q^m}$-code defined as \[\ExtGabcode{\gv}{n,n',k,m} = \left\{F(\gv)=\big(F(g_1), \ldots, F(g_n)\big) \mid F(X)\in \FqmXq, \hspace{0.2cm} \deg_q(F)<k\right\}.\]
\end{mean}

We can observe that if $m\geq n$ and $n'=n$, the coordinates of the evaluation vector $\gv$ are linearly independent, and EG codes reduce to Gabidulin codes. The key difference between these two codes is that, in the case of EG codes,  the evaluation vector is allowed to be \textit{linearly dependent}, \ie $n'\leq \min\{m,n\}$. This relaxation weakens the algebraic structure of the code, which is a desirable property in certain cryptographic contexts.

Also, AG codes are a particular case of EG codes; indeed, $\AugGabcode{\gv}{n,n',k,m}\subset \ExtGabcode{\gv}{n,n',k,m}$. This viewpoint highlights that AG codes inherit the algebraic framework of EG codes, while introducing additional structure -- namely, evaluation on partially zero-padded vectors -- which plays a central role in their decoding strategy with support erasures. This property is especially relevant in cryptographic settings such as RQC, where controlling the support of errors is crucial (see Section~\ref{subsec:decodingGab}).

In \cite{song2025interleaved}, Song et al. suggest that EG codes are of interest to be used in RQC.
They prove that their decoding can be reduced to the \textbf{Reconstruction}($\yv, \gv, n, k, t$) problem, and thus, propose to apply Loidreau's algorithm to solve it. With this method, the complexity of decoding EG codes is $\Ocomp(n^2)$.

In the case of AG codes, we showed in section \ref{sec:decodingAG} that their decoding reduced to a restricted version of the \textbf{Reconstruction} problem, thanks to the recovery of the support erasure. The decoding algorithm resulting from this restriction has a complexity $\Ocomp((n-n')^2)$, as proven in section \ref{sec:complexity}.

In the end, the decoding appears faster for AG codes than for EG codes for the same parameters. Indeed, both theoretical and practical complexity shown in Table \ref{tab:aug-vs-ext} confirm this proposal: decoding EG codes requires three times more multiplications and Frobenius than equivalent AG code, and two times more CPU cycles.

\medskip

{\bf Overall comparison between AG and EG codes.} EG codes were introduced in 2009 \cite{berger2009construction} but no decoding algorithm was proposed for these codes at that time, later in 2024 \cite{aragon2024blockwise} AG codes were introduced and a decoding algorithm was proposed. AG codes and EG codes are equivalent codes, so that in terms of distance property there is no difference between these two families.

However in practice having additional zero columns in AG codes rather than additional random combinations of columns of the Gabidulin code for EG codes, is clearly an advantage in terms of decoding, as explained in this section. So that, in practice the specific subfamily of AG codes of EG codes seems to be the most relevant to use in terms of efficiency for the decryption of the RQC cryptosystem and its variations.

\begin{table}[H]
	\centering
	\begin{tabular}{|c|c|c|c|c|}
		\hline
		\textbf{Code} & \textbf{Additions} & \textbf{Multiplications} & \textbf{Frobenius} & \textbf{CPU cycles} \\ \hline
		AG code       & 12400              & 9200                     & 5800               & 675K                \\ \hline
		EG code       & 12500              & 32600                    & 14600              & 1.50M               \\ \hline
	\end{tabular}
	\caption{Theoretical and practical complexity of decoding generalized Gabidulin with parameters $[104,3]$ over $\mathbb{F}_{2^{43}}$ and $\wrank{\gv}=43$}
	\label{tab:aug-vs-ext}
\end{table}

\paragraph*{AG decoding complexity distribution.} Figure \ref{fig:cycles-distrib-decoding} describes the practical distribution of each key step of the decoding algorithm of the AG codes. Recalling subsection \ref{subsec:decodingGab}, the \Call{SolveReconstruction}{} function can be split into two steps. The initialization consists in the calculation of both $\mathcal{A}_{\langle g_1,\ldots,g_k\rangle}(X)$ and $I_{\langle g_1,\ldots,g_k\rangle,\langle y_1,\ldots,y_k\rangle}$, where $k$ denotes the code dimension. The interpolation consists in the calculation of the sequence of $q$-polynomials pairs verifying the Equation \eqref{eq:loid}, for $k\leq i\leq n$, where $n$ denotes the code length.

We observe that the initialization of the \Call{SolveReconstruction}{} procedure takes a third of the time. This is caused by the reduction of the \textbf{Reconstruction} problem: the Gabidulin code involved has high dimension, which means that there are not many steps for the interpolation part. An alternative to interpolation decoding is syndrome decoding, which is more convenient where the Gabidulin codes are high-dimensional. If a constant-time implementation of such a decoding algorithm can be effective, it may be more efficient than today's decoding algorithm.

\begin{figure}[ht]
	\centering
	\begin{tikzpicture}
		\draw[] (0,0) rectangle (10, 1);
		\draw[] (0,0) rectangle (1.25, 1) node[midway] {$12.5\%$};
		\draw[] (0.625, 1.2) node {\footnotesize step 1 (Gauss)};
		\draw[] (1.25,0) rectangle (2, 1) node[midway] {$7.5\%$};
		\draw[] (1.625, -.2) node {\footnotesize steps 2 to 8};
		\draw[] (2,0) rectangle (5.2, 1) node[midway] {$32\%$};
		\draw[] (3.6, 1.2) node {\footnotesize step 9 (Init.)};
		\draw[] (5.2,0) rectangle (5.85, 1) node[midway] {$6.5\%$};
		\draw[] (5.52, -.2) node {\footnotesize step 9 (Interp.)};
		\draw[] (5.85,0) rectangle (7.5, 1) node[midway] {$16.5\%$};
		\draw[] (6.65, 1.2) node {\footnotesize step 10 ($N\setminus W$)};
		\draw[] (7.5,0) rectangle (10, 1) node[midway] {$25\%$};
		\draw[] (8.75, -.2) node {\footnotesize step 11 ($N_2\setminus V_2$)};
	\end{tikzpicture}
	\caption{Execution time distribution for Algorithm \ref{alg:decoding-augmented-gabidulin} using RQC-Block-MS-AG parameters \ref{tab:rqc_params}}\label{fig:cycles-distrib-decoding}
\end{figure}

\subsection{Performances}

We now compare our work to other KEMs: the previous RQC implementation from the NIST proposal \cite{rqc-spec}, the latest HQC implementation \cite{hqc-spec}, and the ML-KEM implementation available in the \texttt{liboqs} library. Our proposal is labeled as RQC-Block-MS-AG in the results table. Files required to reproduce these benchmarks will be made publicly available on GitHub soon.

\begin{table}[ht]
	\centering
	\begin{tabular}{|c|c|c|c|c|}
		\hline
		\textbf{Algorithm (128 bits)} & \textbf{Keygen} & \textbf{Encaps} & \textbf{Decaps} & \textbf{Size pk + ct (B)} \\
		\hline
		RQC-Block-MS-AG               & 300K            & 580K            & 1.65M           & 1432                      \\
		\hline
		RQC (NIST proposal)           & 595K            & 685K            & 2.75M           & 5486                      \\
		\hline
		HQC                           & 77K             & 145K            & 325K            & 6674                      \\
		\hline
		ML-KEM                        & 26K             & 28.5K           & 34K             & 1568                      \\
		\hline
	\end{tabular}
	\caption{Average CPU cycles for performing the Keygen, Encaps and Decaps operations for different KEMs (128 bits of security).}
	\label{tab:results_cpu}
\end{table}

Table \ref{tab:results_cpu} presents the performance of RQC-Block-MS-AG compared to standardized KEMs and the previous version of RQC. In addition to providing a fully constant-time implementation for the RQC cryptosystem, our implementation outperforms the previous implementation from the NIST standardization process for every operation. When compared to HQC, we observe a performance trade-off: while our implementation is approximately four to five times slower, it compensates with significantly more compact parameters. Specifically, RQC-Block-MS-AG achieves public key and ciphertext sizes roughly four times smaller than HQC. Further optimizations of the implementation could reduce the execution time gap with HQC while maintaining constant-time properties.

\subsection{Constant-time}

It was shown in \cite{bettaieb2019preventing} that there exists a correlation between the weight of the error and the decoding time for Loidreau's algorithm, and hence the same holds for our AG decoding algorithm. To verify that our new implementation removes this dependency, we ran the decoding algorithm for each error weight up to the code's correction capacity. We then analyzed our implementation using the \texttt{Valgrind} debugging tool.

\paragraph*{Timing analysis over error weight.} The Figure \ref{fig:time-decoding} shows the average decoding time of AG codes as a function of the error weight, ranging from 0 to 36, which is the decoding capacity of the code used in the RQC-Block-MS-AG scheme. The experiments show that the number of CPU cycles for the decoding operation now varies very slightly when the input weight changes: the relative difference between the minimum and maximum observed values is less than $0.2\%$.

\begin{figure}[ht]
	\centering
	\begin{tikzpicture}
		\begin{axis}[
				xlabel={Error weight},
				ylabel={Average CPU Cycles ($\times 10^4$)},
				xmin=0, xmax=36,
				ymin=64, ymax=68,
				xtick={0, 4, 8, 12, 16, 20, 24, 28, 32, 36},
				ytick={64, 64.5, 65, 65.5, 66, 66.5, 67, 67.5, 68}
			]
			\addplot[
				color=blue,
				mark=x,
				mark size=1pt,
			]
			coordinates {(0, 66.9368)(1, 66.8972)(2, 66.9089)(3, 66.9218)(4, 66.9225)(5, 66.9562)(6, 66.9640)(7, 66.9662)(8, 66.9278)(9, 66.9470)(10, 66.9576)(11, 66.9524)(12, 66.9550)(13, 66.9537)(14, 66.9670)(15, 66.9698)(16, 66.9638)(17, 66.9500)(18, 66.9503)(19, 66.9332)(20, 66.9390)(21, 66.9325)(22, 66.9376)(23, 66.9431)(24, 66.9211)(25, 66.9201)(26, 66.9229)(27, 66.9183)(28, 66.9326)(29, 66.9328)(30, 66.9497)(31, 66.9613)(32, 66.9799)(33, 67.0226)(34, 66.9893)(35, 67.0186)(36, 66.9975)};
		\end{axis}
	\end{tikzpicture}
	\caption{Average CPU cycles of AG decoding, over the weight of the error}\label{fig:time-decoding}
\end{figure}

\paragraph*{Constant-time analysis.} We used Valgrind's memcheck tool to check from data-dependent branching and memory accesses in our program. The analysis process consists in marking variables as uninitialized and running the program, which is called \emph{data poisoning}. If any such variable is used in a conditional branch, Valgrind raises a warning. In our case, the degrees of the $q$-polynomials $V$ and $N$ correlate with this value. We decided to poison these variables, as well as all other degrees that may influence the value. To prevent any leakage from the degrees of $q$-polynomials in the left Euclidean division, we poisoned the degrees of all $q$-polynomials in this function too.

\paragraph{Observations.} As discussed in Section \ref{sect:implementation}, some $q$-polynomial functions of the RBC library use the degree in conditional branches, in particular for the range of \texttt{for} loops. Thanks to the constant-time analysis configuration, we modify the program and achieve with a decoding algorithm that does not use any degree as a variable anymore. Indeed, from Figure \ref{fig:time-decoding}, the experiments show that there exists no more correlation between secret values and elapsed time.

\section{Conclusion}
In this paper, we present the first constant-time implementation of a decoding algorithm of AG codes. This decoding algorithm achieves a better complexity compared to \cite{aragon2024blockwise}, exploiting the Loidreau's reconstruction. This theoretical improvement allows us to show that AG codes outperforms EG codes for the same parameters, making them more relevant for decryption in the RQC cryptosystem and its variations.

We develop a constant-time implementation of the decoding algorithms for both Gabidulin and AG codes, as well as the left Euclidean division operation. Through extensive timing analysis and validation with \texttt{Valgrind}, we demonstrated that our implementation satisfies constant-time requirements, eliminating vulnerabilities to timing attacks. Our RQC-Block-MS-AG implementation significantly outperforms the previous RQC implementation. While it is approximately 4 times slower than the optimized HQC implementation, this performance difference is compensated by a fourfold reduction in public key and ciphertext sizes, offering an attractive trade-off. Furthermore, additional optimizations of the implementation could reduce the execution time gap while maintaining constant-time properties.

The techniques developed for AG code decoding can be directly applied to optimize Gabidulin code decoding, since AG decoding generalizes the Gabidulin case. Future works could explore syndrome decoding approaches for high-dimensional Gabidulin codes, which may offer substantial performance improvements while preserving the compact parameter sizes that make rank metric-based cryptography particularly promising for resource-constrained environments.

\bibliographystyle{plain}
\bibliography{biblio}

\end{document}